\documentclass[journal,twoside,web]{ieeecolor}
\usepackage{generic}
\usepackage{cite}
\usepackage{amsmath,amssymb,amsfonts}
\usepackage{algorithmic}
\usepackage{graphicx}
\usepackage{algorithm,algorithmic}
\usepackage{hyperref}
\hypersetup{hidelinks=true}
\usepackage{textcomp}
\usepackage{theoremref}
\usepackage{makecell}
\def\BibTeX{{\rm B\kern-.05em{\sc i\kern-.025em b}\kern-.08em
    T\kern-.1667em\lower.7ex\hbox{E}\kern-.125emX}}
\newcommand{\x}{\boldsymbol{x}}
\newcommand{\z}{\boldsymbol{z}}
\newcommand{\bmu}{\boldsymbol{\mu}}
\newcommand{\w}{\boldsymbol{w}}

\newcommand{\s}{\boldsymbol{s}}
\newcommand{\dd}{\boldsymbol{d}}
\DeclareMathOperator*{\argmax}{arg\,max}
\DeclareMathOperator*{\argmin}{arg\,min}

\newtheorem{remark}{Remark}
\newtheorem{definition}{Definition}
\newtheorem{lemma}{Lemma}
\newtheorem{theorem}{Theorem}

\begin{document}
\title{Performance Guarantees for Data-Driven Sequential Decision-Making}
\author{Bowen Li, \IEEEmembership{Graduate Student Member, IEEE}, Edwin K. P. Chong, \IEEEmembership{Fellow, IEEE}, and Ali Pezeshki, \IEEEmembership{Senior Member, IEEE}
\thanks{Manuscript submitted on March 31st, 2026.}
\thanks{This work is supported in part by the AFOSR under award FA8750-20-2-0504 and by the NSF under award CNS-2229469.}
\thanks{The authors are with the Department of Electrical and Computer Engineering, Colorado State University, Fort Collins, CO 80523, USA. (email: bowen.li@colostate.edu; edwin.chong@colostate.edu; ali.pezeshki@colostate.edu).
}
}

\maketitle

\begin{abstract}
    The solutions to many sequential decision-making problems are characterized by dynamic programming and Bellman's principle of optimality. However, due to the inherent complexity of solving Bellman's equation exactly, there has been significant interest in developing various approximate dynamic programming (ADP) schemes to obtain near-optimal solutions. A fundamental question that arises is: how close are the objective values produced by ADP schemes relative to the true optimal objective values? In this paper, we develop a general framework that provides performance guarantees for ADP schemes in the form of ratio bounds. Specifically, we show that the objective value under an ADP scheme is at least a computable fraction of the optimal value. We further demonstrate the applicability of our theoretical framework through several applications: data-driven robot path planning, pendulum stabilization, and multi-agent sensor coverage.
\end{abstract}

\begin{IEEEkeywords}
    Approximate dynamic programming (ADP), linear quadratic gaussian (LQG), path planning, pendulum stabilization, performance (ratio) bound, reinforcement learning, sensor coverage, sequential decision-making, stochastic optimal control, submodular optimization.
\end{IEEEkeywords}

%%%%%%%%%%%%%%%%%%%%%%%%%%%%%%%%%%%%%%%%%%%%%%%%%%%%%%%%%%%%%%
%%%%%%%%%%%%%%%%%%%%%%%%%%%%%%%%%%%%%%%%%%%%%%%%%%%%%%%%%%%%%%
\section{Introduction}
\label{sec:introduction}

In sequential decision-making and optimal control problems, the goal is typically to select a sequence of actions over a finite horizon to optimize a given objective function. Accounting for the stochastic transitions between states and the accumulation of reward (or cost) obtained at each step, such problems are commonly formulated as \textit{stochastic optimal control} problems. This formulation is often developed within the framework of \textit{Markov Decision Process (MDP)} or \textit{Partially Observable Markov Decision Process (POMDP)}, depending on whether the state of the system is fully observable \cite{chong2009partially}. The classical approach to solving such problems relies on \textit{dynamic programming} through \textit{Bellman's principle of optimality} \cite{bertsekas2012dynamic}. However, this approach suffers from the curse of dimensionality, as obtaining the exact optimal solution becomes computationally intractable with the growing size of state and action spaces, as well as longer decision horizons. To address this issue, various \textit{approximate dynamic programming (ADP)} schemes have been developed to obtain near-optimal solutions. The central idea of ADP is to replace the expected-value-to-go (EVTG) term in Bellman's equation, which is often practically impossible to compute in high-dimensional problems, with some computationally tractable approximations. Some commonly used ADP schemes are model predictive control (MPC) \cite{grune2008infinite}, traditional reinforcement learning \cite{watkins1989learning, suttonRLbook}, deep reinforcement learning \cite{silver2017mastering}, hindsight optimization \cite{chong2000framework, wu2002burst}, policy rollout \cite{bertsekas1999rollout}, etc. These techniques have been applied across a variety of domains, such as adaptive sensing, robotics, and transportation systems \cite{chong2009partially}.

Despite the widespread use of various ADP schemes, quantifying the degree of suboptimality, that is, how different the solutions produced by ADP schemes are from the optimal solutions, remains a challenging problem. Bertsekas \cite{bertsekas2012dynamic, bertsekas2005dynamic} studies bounds on the difference between the objective values achieved by ADP schemes and those of the optimal solution for both finite-horizon and infinite-horizon settings. Gr\"{u}ne and Rantzer \cite{grune2008infinite} provide bounds on the performance gap between a finite-horizon model predictive controller and the infinite-horizon optimal controller. Liu \emph{et al.}~\cite{liuadplcss}, inspired by the seminal results in submodular optimization, develop bounds for the ratio of the objective functions of ADP schemes to those of optimal solutions. However, their bound is computationally intractable. 

In this paper, we develop a general framework to provide computable performance guarantees of ADP schemes relative to optimal policies. More specifically, we are interested in quantifying the following ratio bound: 
$$
\frac{\hat{V}}{V^{*}} \geq \beta, 
$$
where $\hat{V}$ and $V^{*}$ are the expected cumulative reward under a given ADP scheme and the optimal policy, respectively. 

The \textbf{main contributions} of this paper are as follows: 

\begin{enumerate}
    \item To the best of our knowledge, the bounds developed here are 
    the first computable ratio bounds for ADP schemes that is applicable to a broad class of problems.
    \item We establish a connection to the ratio bounds for greedy algorithms in submodular optimization, and show that our ADP bounding framework generalizes those classical results. This highlights a structural link between greedy decision-making and ADP.  
    \item We demonstrate our bounding framework on practical applications, including robot path planning, pendulum stabilization, and multi-agent sensor coverage. 
\end{enumerate}

We note that our bounding framework is different from that of Liu \emph{et al.}~\cite{liuadplcss}, who also derive ratio bounds for ADP schemes, in two important ways: (1) Determining the value of the bound in Liu \emph{et al.} is computationally intractable, whereas our bounds can be easily computed. (2) Our derivation technique for the bound is different from that of Liu \emph{et al.} 

The remainder of the paper is organized as follows. Section~\ref{sec:preliminaries} introduces the notation and background material for stochastic optimal control, along with relevant prior results on the performance guarantees of ADP schemes. Section~\ref{sec:main_results} presents our main contributions and illustrates how our bounding framework generalizes some results in submodular optimization. Section~\ref{sec:applications} demonstrates the applicability of our framework to robot path planning and multi-agent sensor coverage problems. Section~\ref{sec:conclusion} presents our concluding remarks and potential directions for future work.

%%%%%%%%%%%%%%%%%%%%%%%%%%%%%%%%%%%%%%%%%%%%%%%%%%%%%%%%%%%%%%
%%%%%%%%%%%%%%%%%%%%%%%%%%%%%%%%%%%%%%%%%%%%%%%%%%%%%%%%%%%%%%
\section{Preliminaries}
\label{sec:preliminaries}
In this section, we discuss a general framework for finite-horizon stochastic optimal control and explain the method of approximate dynamic programming (ADP).

%%%%%%%%%%%%%%%%%%%%%%%%%%%%%%%%%%%%%%%%%%%%%%%%%%%%%%%%%%%%%%
\subsection{Stochastic Optimal Control}
Let $\mathcal{X}$ denote the set of states, $\mathcal{U}$ the set of control actions, and $H$ the length of decision horizon. Denote the state and control action at time step $k \; (k = 0,\ldots, H-1)$ by $\x_{k} \in \mathcal{X}$ and $\bmu_{k} \in \mathcal{U}$, respectively, with the initial state $\x_{0}$ given. Considering the possibility that the set of feasible control actions may vary at each time step, we define a feasibility mapping $\mathcal{U}_{k}: \mathcal{X} \xrightarrow{} 2^{\mathcal{U}}$, where $2^{\mathcal{U}}$ is the power set of $\mathcal{U}$. Then at time step $k$, the implemented action should satisfy $\bmu_{k} \in \mathcal{U}_{k}(\x_{k}) \subset \mathcal{U}$, i.e., the action $\bmu_{k}$ should belong to the set of feasible actions associated with the current state $\x_{k}$.

Next, we denote the state transition function at time step $k$ as $h_{k}: \mathcal{X} \times \mathcal{U} \times \mathcal{W} \xrightarrow{} \mathcal{X}$ with the transition given by $\x_{k+1} = h_{k}(\x_{k},\bmu_{k},\w_{k})$, where $\w_{k} \in \mathcal{W}$ are i.i.d. random variables that introduce stochasticity into the state transitions. The decision rule at each step is defined by the policy mapping $\pi_{k}: \mathcal{X} \xrightarrow{} \mathcal{U}$ such that $\bmu_{k} = \pi_{k}(\x_{k})$, where $\bmu_{k}$ is a deterministic action. Finally, the reward function at each step is $r_{k}:\mathcal{X} \times \mathcal{U} \xrightarrow{} \mathbb{R}_{+}$ with $r_{k}(\x_{k},\bmu_{k})$ representing a non-negative scalar reward after taking action $\bmu_{k}$ in state $\x_{k}$. The terminal state of the system is $\x_{H}$. Without further actions taken, the reward obtained at the terminal state $\x_{H}$ is $r_{H}(\x_{H})$. 

Based on the above notation, the \textit{stochastic optimal control problem} is formulated as: 
\begin{equation}
\label{mdp_form}
    \begin{aligned}
        \max_{(\pi_{0},\ldots,\pi_{H-1})} & \sum_{k=0}^{H-1}\mathbb{E}\left[ r_{k}\left( \x_{k},\pi_{k}(\x_{k}) \right) \right]  + \mathbb{E}\left[ r_{H}(\x_{H}) \right]\\
        \text{subject to} \quad & \x_{k+1} = h_{k}(\x_{k},\pi_{k}(\x_{k}),\w_{k}), \\
        & \pi_{k}(\x_{k}) \in \mathcal{U}_{k}(\x_{k}), \\
        & k = 0,\ldots,H-1.
    \end{aligned} 
\end{equation}

Problem \eqref{mdp_form} is also referred to as a \textit{Markov Decision Process (MDP)} in the literature, which arises in a wide range of applications including sensor management in remote sensing \cite{chong2009partially}, UAV control \cite{ragi2013uav}, and games such as Go \cite{silver2017mastering}. The policy mapping $\pi_{k}$ in an MDP is also called a \textit{state-feedback control law}, as the action $\bmu_{k}$ taken at step $k$ depends on state $\x_{k}$. So far, we have assumed the policy mapping $\pi_{k}$ to be deterministic. More generally, the framework can be extended to randomized policies, where the feedback-control action is a random variable conditioned on the state. In other words, we take $\pi_k(\x_k)$ to be a $\mathcal{U}$-valued random variable with a distribution that depends on $\x_k$.
%However, there always exists an optimal policy at each time step that is deterministic when the reward function $r_{k}(\x_{k},\bmu_{k})$ is deterministic \cite{puterman2014markov}. Moreover, we always need to select a particular control action at each time step. Therefore, we primarily focus on deterministic policies in this work. 

% The action $\bmu_{k}$ is also a random variable, which is a measurable function of $\x_{k}$ through the mapping $\pi_{k}$.

% mapping from sample space $\mathcal{X}$ to the measurable space $(\mathcal{U},2^{\mathcal{U}})$. As a result, the policy $\pi_{k}$ is often called a \textit{randomized policy}, because $\bmu_{k} = \pi_{k}(\x_{k})$ can end up with different actions, each with associated probabilities. 

A more general setting arises when the state $\x_{k}$ is only partially observable; that is, all we can observe at time step $k$ is $\boldsymbol{o}_{k}$ generated by the observation law $\boldsymbol{o}_{k} = O(\x_{k},\pi_{k-1}(\x_{k-1}),\boldsymbol{\xi}_{k})$, where $\boldsymbol{\xi}_{k}$ is the 
$k^{\text{th}}$ element of an i.i.d. sequence. This type of problem is known as a \textit{Partially Observable Markov Decision Process (POMDP)} and can be reformulated and solved as a belief-state MDP \cite{bertsekas2012dynamic,chong2009partially}. Our main theoretical results are only developed under the framework of MDP. However, it can be extended to POMDPs.

% \textcolor{blue}{POMDP or deterministic? } A degenerate setting of problem \eqref{mdp_form} is the case where the state transitions are deterministic without the random disturbances terms and we optimize 

Before proceeding to the subsequent sections, we introduce several notation and definitions that are essential for both existing results and our main contributions. 

%\begin{definition}
\textbf{Optimal Sequence of Policies and States:} We call an optimal solution to problem \eqref{mdp_form} an \textit{optimal sequence of policies} and denote it by $(\pi_{0}^{*},\ldots,\pi_{H-1}^{*})$. When this policy sequence is applied, the corresponding sequence of visited states is denoted by $(\x_{0}^{*},\ldots,\x_{H-1}^{*},\x_{H}^{*})$, which we call the \textit{optimal state path}. Accordingly, the \textit{optimal sequence of actions} is given by $(\bmu_{0}^{*},\ldots,\bmu_{H-1}^{*})$, where each action satisfies $\bmu_{k}^{*} = \pi_{k}^{*}(\x_{k}^{*}) \in \mathcal{U}(\x_{k}^{*})$.
%\end{definition}

Due to the randomness in the state transitions at each step, the optimal state path is a stochastic sequence. Each realization of the path $(\x_{0}^{*},\ldots,\x_{H-1}^{*},\x_{H}^{*})$ under the optimal policy sequence $(\pi_{0}^{*},\ldots,\pi_{H-1}^{*})$ may be different.  Similarly, the optimal sequence of actions $(\bmu_{0}^{*},\ldots,\bmu_{H-1}^{*})$ is also stochastic. 

\vspace{\baselineskip}

\begin{definition}
\thlabel{evtg_def}
    The \textit{expected-value-to-go (EVTG)} from time step $k+1$ onward after taking action $\bmu$ in state $\x_{k}$ is defined as
    \begin{align}
    &W_{k+1}^{*}(\x_{k},\bmu) \notag \\
    & = \max_{(\pi_{k+1},\ldots,\pi_{H-1})} \sum_{i=k+1}^{H-1}\mathbb{E}\left[ r_{i}\left( \x_{i},\pi_{i}(\x_{i}) \right) +  r_{H}(\x_{H}) | \x_{k},\bmu \right] \notag \\
    & \text{subject to} \quad  \x_{k+1} = \; h_{k}(\x_{k},\bmu,\w_{k}); \\
    & \hspace{1.8cm} \x_{i+1} =  \; h_{i}(\x_{i},\pi_{i}(\x_{i}),\w_{i}); \; \pi_{i}(\x_{i})  \in  \; \mathcal{U}_{i}(\x_{i}), \notag
    \end{align}
    
    % $$
    % W_{k+1}^{*}(\x_{k}^{*},\bmu) = \sum_{i = k+1}^{H-1} \mathbb{E}\left[ r_{i}(\x_{i}^{*},\bmu_{i}^{*}) | \x_{k}^{*},\bmu \right] + \mathbb{E}\left[ r_{H}(\x_{H}^{*}) \right],
    % $$
    for $k = 0,\ldots,H-2$, and 
    $$
    \begin{aligned}
    W_{H}^{*}(\x_{H-1},\bmu) & = \mathbb{E}\left[ r_{H}(\x_{H}) | \x_{H-1},\bmu \right] \\
    \text{subject to} \quad \x_{H} & = h_{H-1}(\x_{H-1},\bmu,\w_{H-1}). \\
    & 
    \end{aligned}
    $$
\end{definition}

\begin{remark}
    The EVTG quantifies the optimal expected cumulative reward from step $k+1$ to the end of the decision horizon, conditioned on being in state $\x_{k}$ and taking action $\bmu$ at step $k$. All subsequent policies and their associated actions are optimal with respect to the state at their respective time step. This is why the EVTG term is denoted with the superscript $^{*}$ in $W_{k+1}$.
    % All subsequence state and actions, denoted with superscript $^{*}$, are assumed to follow the optimal sequence of policies. 
    Although the exact evaluation of $W_{k+1}^{*}(\x_{k},\bmu)$ is generally intractable, it is useful as a theoretical construct.
\end{remark}
\vspace{\baselineskip}

% \begin{definition}
% \thlabel{appro_evtg_def}
% \textbf{Approximate expected-value-to-go:} Since $W_{k+1}^{*}(\x_{k},\bmu)$ is generally intractable to compute, we use $\hat{W}_{k+1}(\x_{k},\bmu)$ for  $k = 0,\ldots,H-1$ to denote the approximate expected-value-to-go after taking action $\bmu$ in state $\x_k$, which is computationally tractable. 
% % \end{definition}

% \begin{remark}
%     Unlike its exact counterpart $W_{k+1}^{*}(\x_{k},\bmu)$ in \thref{evtg_def}, the approximate EVTG $\hat{W}_{k+1}(\x_{k},\bmu)$ is not strictly defined by a sum of expected rewards. Instead, it serves as a computationally tractable approximation. Hence, no explicit form of analytical expression is assumed for $\hat{W}_{k+1}(\x_{k},\bmu)$. One exception is the last term $\hat{W}_{H}(\x_{H-1},\bmu)$. At the last step, it can be written as the same form as  $W_{H}^{*}(\x_{H-1},\bmu)$ in \thref{evtg_def}, since only a single transition remains and this transition is known. 
    
%     % Besides, $\hat{\x}_{k}$ is used in this definition since the optimal state path, denoted with superscript $^{*}$, is not expected to be visited when using the approximate EVTG. Throughout the paper, any quantity representing an approximation is denoted using a hat superscript $\hat{}$. 
% \end{remark}
% \vspace{\baselineskip}

\begin{definition}
\thlabel{exact_Q}
    The \textit{exact Q-value} from time step $k$ onward of taking action $\bmu$ in state $\x_{k}$ is defined as 
    \begin{equation}
    Q_{k}^{*}(\x_{k},\bmu) = r_{k}(\x_{k},\bmu) + W_{k+1}^{*}(\x_{k},\bmu),
    \end{equation}
    for $ k = 0,\ldots,H-1$.
\end{definition}

% \begin{definition}
% \thlabel{appro_Q}
%     The \textit{approximate Q-value} of taking action $\bmu$ in state $\x_{k}$ is defined as 
%     \begin{equation}
%     \hat{Q}_{k}(\x_{k},\bmu) = r_{k}(\x_{k},\bmu) + 
%     \hat{W}_{k+1}(\x_{k},\bmu), 
%     \end{equation}
%     for $ k = 0,\ldots,H-1$.
% \end{definition}

\begin{remark}
    Comparing to $W_{k+1}^{*}(\x_{k},\bmu)$ in \thref{evtg_def}, the Q-value includes an additional term: the immediate reward obtained at step $k$, which can be explicitly computed based on $\x_{k}$ and $\bmu$. To ensure a fair comparison in terms of horizon length, $Q_{k}^{*}$ should be compared with $W_{k}^{*}$, since both quantities represent the cumulative reward from step $k$ onward. This alignment is also reflected in the analysis presented in the following sections.  
\end{remark}

\vspace{\baselineskip}
\begin{definition}
\thlabel{value_fun_def}
    The value-to-go $V^{*}_{k}(\x_{k})$ in state $\x_{k}$ from time step $k$ onward is defined as: 
    \begin{align}
        & V^{*}_{k}(\x_{k})  = \max_{(\pi_{k},\ldots,\pi_{H-1})}  \sum_{i=k}^{H-1}\mathbb{E}\left[ r_{i}\left( \x_{i},\pi_{i}(\x_{i}) \right) \right]  + \mathbb{E}\left[ r_{H}(\x_{H}) \right] \notag\\
        & \text{subject to  }   \x_{i+1} = h_{i}(\x_{i},\pi_{i}(\x_{i}),\w_{i}); \; \pi_{i}(\x_{i}) \in \mathcal{U}_{i}(\x_{i}),
    \end{align} 
\end{definition}
~\\
for $k=0,\ldots,H-1$.

\begin{remark}
    The value-to-go $ V^{*}_{k}(\x_{k})$ yields the optimal objective value from time step $k$ onward, starting at state $\x_{k}$. Regardless of the value of state $\x_{k}$, all the subsequent policies $\pi_{i} \;(i = k,\ldots,H-1)$ are optimal, each selected based on the observed state at their respective time steps. 
\end{remark}

\vspace{\baselineskip}
Based on the above definitions, we can establish a connection between $W_{k}^{*}$ and $V_{k}^{*}$ as follows:
% \begin{align}
% \label{W_Q_link}
%     & W_{k}^{*}(\x_{k-1},\bmu_{k-1}) \notag \\
%     & = \max_{\bmu_{k} \in \mathcal{U}_{k}(\x_{k}) }\mathbb{E}
%     \left[  Q_{k}^{*}(\x_{k},\bmu_{k}) | \x_{k-1},\bmu_{k-1} \right] ,
% \end{align}
\begin{equation}
\label{W_V_link}
    W_{k}^{*}(\x_{k-1},\bmu_{k-1}) = \mathbb{E}
    \left[ V_{k}^{*}(\x_{k}) | \x_{k-1},\bmu_{k-1} \right],
\end{equation}
which is not hard to verify. Note that the expectation $\mathbb{E}$ operates with respect to $V_{k}^{*}$ in \eqref{W_V_link} since $\x_{k}$ is random. We will use \eqref{W_V_link} later in Section~\ref{sec:applications}.

%%%%%%%%%%%%%%%%%%%%%%%%%%%%%%%%%%%%%%%%%%%%%%%%%%%%%%%%%%%%%%
\subsection{Dynamic Programming with Bellman's Equation}
The optimal policy sequence $(\pi_{0}^{*},\ldots,\pi_{H-1}^{*})$ to problem \eqref{mdp_form} can be recursively obtained by \textit{Bellman's principle of optimality}, which states that: An optimal policy has the property that, whatever the initial state and initial decision are, the remaining decisions must constitute an optimal policy with regard to the state resulting from the first decision \cite{bellman1966dynamic}. 

More specifically, the Bellman's principle of optimality can be mathematically represented by the following dynamic programming equations: 
\begin{equation}
\label{bellman_optimal}
    \bmu_{k}^{*}  = \pi_{k}^{*}(\x_{k}^{*}) = \argmax_{\bmu \in \mathcal{U}_{k}(\x_{k}^{*})} Q_{k}^{*}(\x_{k}^{*},\bmu)
\end{equation}
for $k = 0,\ldots,H-1$, where $\x_{0}^{*} = \x_{0}$.

All the states and actions in equation \eqref{bellman_optimal} are denoted with superscript $^{*}$ because optimal actions are taken at all time steps and optimal state path $(\x_{0}^{*},\ldots,\x_{H-1}^{*},\x_{H}^{*})$ is visited following the optimal sequence of actions $(\bmu_{0}^{*},\ldots,\bmu_{H-1}^{*})$. 

In order to obtain the optimal sequence of actions (policies), we need to solve equation \eqref{bellman_optimal} and iterating backwards over indices $k = H-1, H-2, \ldots, 0$. However, the iteration procedure suffers from the curse of dimensionality and is computationally intractable for many problems \cite{bellman1966dynamic}. 

Additionally, based on the above dynamic programming equations and the definitions of $W_{k}^{*}$ and $Q_{k}^{*}$, we can have:
\begin{equation}
\label{link_QW_exact}
    \mathbb{E} \left[ Q_{k}^{*}(\x^{*}_{k},\bmu^{*}_{k})  | \x^{*}_{k-1}, \bmu^{*}_{k-1} \right] = W^{*}_{k}(\x^{*}_{k-1},\bmu^{*}_{k-1}),
\end{equation}
for $k = 0,1,\ldots,H-1$. In our discussion on ADP below, we will replace $W^*$ and $Q^*$ in equation \eqref{link_QW_exact} by their respective approximations $\hat{W}$ and $\hat{Q}$, in which case equation \eqref{link_QW_exact} only holds approximately. We will come back to this point later. 

\subsection{Approximate Dynamic Programming}
To obtain an approximate solution to equation \eqref{bellman_optimal}, we need to replace the EVTG term $W_{k+1}^{*}$ with an approximate quantity that is easy to compute, and then perform the ``$\argmax$" operator on the approximate quantity. This approach is called \textit{approximate dynamic programming (ADP)}. Next, we introduce the relevant concepts and notation for ADP. 

We use $\hat{W}_{k+1}(\x_{k},\bmu)$ for  $k = 0,\ldots,H-1$ to denote the \textit{approximate expected-value-to-go (EVTG)} after taking action $\bmu$ in state $\x_k$, which is computationally tractable. Unlike its exact counterpart $W_{k+1}^{*}(\x_{k},\bmu)$ in \thref{evtg_def}, the approximate EVTG $\hat{W}_{k+1}(\x_{k},\bmu)$ is not strictly defined by a sum of expected rewards. Instead, it serves as a computationally tractable approximation. Hence, no explicit form of analytical expression is assumed for $\hat{W}_{k+1}(\x_{k},\bmu)$. Its expression depends on how you design your approximation. One exception is the last term $\hat{W}_{H}(\x_{H-1},\bmu)$. At the last step, it can be written as the same form as  $W_{H}^{*}(\x_{H-1},\bmu)$ in \thref{evtg_def}, since only a single transition remains and this transition is known. Similarly, we use $\hat{V}_{k}(\x_{k})$ for $k = 0,\ldots,H-1$ to denote the \textit{approximate value-to-go} in state $\x_{k}$, which is computationally tractable. As with $\hat{W}_{k+1}(\x_{k},\bmu)$, no particular analytical form is assumed for $\hat{V}_{k}(\x_{k})$.

We use $\hat{Q}_{k}(\x_{k},\bmu)$ to denote the \textit{approximate Q-value} of taking action $\bmu$ in state $\x_{k}$, which is written as
\begin{equation}
    \hat{Q}_{k}(\x_{k},\bmu) = r_{k}(\x_{k},\bmu) + 
    \hat{W}_{k+1}(\x_{k},\bmu), 
\end{equation}
for $ k = 0,\ldots,H-1$. Conceptually, replacing $W_{k+1}^{*}(\x_{k},\bmu)$ with $\hat{W}_{k+1}(\x_{k},\bmu)$ is equivalent to replacing $Q_{k}^{*}(\x_{k},\bmu)$ with $\hat{Q}_{k}(\x_{k},\bmu)$, since the immediate reward term $r_{k}(\x_{k},\bmu)$ can be explicitly computed. Similar to $Q^{*}_{k}$ in \thref{exact_Q}, $\hat{Q}_{k}$ should be compared with $\hat{W}_{k}$ to ensure a fair comparison in terms of horizon length.

\vspace{\baselineskip}
\begin{definition}
\thlabel{adp_policy_def}
    Given the approximate Q-value function, the ADP policy $\hat{\pi}_{k}$ and the associated action $\hat{\bmu}_{k}$ in state $\hat{\x}_{k}$ are defined as:
    \begin{equation}
        \hat{\bmu}_{k} = \hat{\pi}_{k}(\hat{\x}_{k}) = \argmax_{\bmu \in \mathcal{U}_{k}(\hat{\x}_{k})} \hat{Q}_{k}(\hat{\x}_{k},\bmu) 
    \end{equation}
for $k = 0,\ldots,H-1$, where $\hat{\x}_{0} = \x_{0}$ and $\hat{\x}_{k+1} = h_{k}(\hat{\x}_{k},\hat{\bmu}_{k},\w_{k})$. 
\end{definition}

\begin{remark}
    All the states and actions in \thref{adp_policy_def} are denoted with a hat superscript $\; \hat{} \;$, which represents approximations. This is because the optimal state path, denoted with $^{*}$, is not expected to be visited when using approximate EVTG. Throughout the paper, any quantity representing an approximation is denoted using a hat superscript $\; \hat{} \;$. 
\end{remark}

\vspace{\baselineskip}
A given approximate Q-value function, together with its ADP policy and associated action, is referred to as an \textit{ADP scheme}. Common examples of ADP schemes include model predictive control (MPC) \cite{grune2008infinite}, reinforcement learning \cite{watkins1989learning,suttonRLbook,silver2017mastering}, policy rollout \cite{bertsekas1999rollout}, etc. In MPC, people consider the system behavior over a finite prediction horizon that is shorter than the full horizon $H$. As a result, the exact Q-value function $Q_k^{*}$, which accounts for all future rewards up to the end of the horizon, is approximated by $\hat{Q}_k$, which only incorporates rewards over a truncated horizon. The approximate Q-value function $\hat{Q}_{k}$ can also be represented using function approximators such as neural networks. Different reinforcement learning algorithms then correspond to different training procedures for these neural-network-based $\hat{Q}_{k}$.

Given an ADP scheme, we generally have $$\mathbb{E} \left[ \hat{Q}_{k}(\hat{\x}_{k},\hat{\bmu}_{k})  | \hat{\x}_{k-1}, \hat{\bmu}_{k-1} \right] \neq \hat{W}_{k}(\hat{\x}_{k-1},\hat{\bmu}_{k-1})$$ due to approximation errors. If the ADP scheme is accurate enough such that $W_{k}^{*}(\x_{k-1}^{*},\bmu^{*}_{k-1}) \approx \hat{W}_{k}(\hat{\x}_{k-1}, \hat{\bmu}_{k-1})$, then we expect 
\begin{equation}
\label{link_QW_approx}
    \mathbb{E} \left[ \hat{Q}_{k}(\hat{\x}_{k},\hat{\bmu}_{k})  | \hat{\x}_{k-1}, \hat{\bmu}_{k-1} \right] - \hat{W}_{k}(\hat{\x}_{k-1},\hat{\bmu}_{k-1}) \approx 0. 
\end{equation}
Later, we use the expected value of the left-hand side of \eqref{link_QW_approx} as a measure of the approximation error in the ADP scheme.

\vspace{\baselineskip}
\begin{definition}
    The optimal value function of problem \eqref{mdp_form} is the 
    objective value evaluated under the optimal sequence of policies $(\pi_{0}^{*},\ldots,\pi_{H-1}^{*})$, denoted by $V^{*}$. 
    The approximate value function of problem \eqref{mdp_form} under an ADP scheme is the objective value evaluated under the sequence of ADP policies $(\hat{\pi}_{0},\ldots,\hat{\pi}_{H-1})$, denoted by $\hat{V}$.
\end{definition}

\begin{remark}
\thlabel{V_star_with_initial}
    Since the initial state $\x_{0}$ is given, we have 
    $$V^{*} = Q_{0}^{*}(\x_0,\bmu_{0}^{*}) = V^{*}_{0}(\x_{0}),$$
    which is not hard to verify by the definition of $\bmu_{0}^{*}$ and $V^{*}_{0}(\x_{0})$. In addition, since the reward function is non-negative at each step, we always have $0 \leq \hat{V} \leq V^{*}$.
\end{remark}
\vspace{\baselineskip}

In the results in Section.~\ref{sec:main_results}, we restrict our attention to initial states $\x_{0}$ satisfying 
$$V^{*} = V^{*}_{0}(\x_{0}) > 0.$$ 
If we consider the initial states $\x_{0}$ where $V^{*} = V^{*}_{0}(\x_{0}) = 0$, then by \thref{V_star_with_initial}, it follows that $\hat{V}  =V^{*} = 0$. Thus, any ADP scheme produces the optimal solution, making the performance analysis trivial.

%%%%%%%%%%%%%%%%%%%%%%%%%%%%%%%%%%%%%%%%%%%%%%%%%%%%%%%%%%%%%%
\subsection{Previous Results}
To evaluate the quality of an ADP scheme, there has been interests in quantifying the distance between $\hat{V}$ and $V^{*}$. Common performance metrics include the difference metric $|\hat{V} - V^{*}|$ or the ratio metric $\hat{V} / V^{*}$. 

Bertsekas \cite{bertsekas2012dynamic,bertsekas2005dynamic} studies the bounds on the difference metric $|\hat{V} - V^{*}|$ for both finite horizon and infinite horizon problems. His results are derived for general ADP schemes without restricting the approximation approach. There have also been efforts to provide bounds for some specific classes of ADP schemes. For instance, Gr\"{u}ne and Rantzer \cite{grune2008infinite} develop difference bounds for MPC when approximating infinite horizon problems under different conditions. Hertneck \emph{et al.}\cite{hertneck2018learning}~study the difference bound between learning-based approximate MPC policies and robust MPC policies. 

Although difference bounds provide useful performance guarantees, they may not accurately capture the quality of approximation when the magnitudes of $\hat{V}$ and $V^{*}$ are either very large or very small. Consider the simple example where $\hat{V} = 0.9 \times 10^{6}$ and $V^{*} = 10^{6}$. The absolute error is $|\hat{V} -  V^{*}| = 10^{5}$, which appears large in magnitude. However, the corresponding performance ratio is $\hat{V} / V^{*} = 90\%$, indicating a tight approximation. Similarly, the pairs $(\hat{V}_{1} = 9 , V^{*}_{1} =10)$ and $(\hat{V}_{2} = 99, V^{*}_{2} =100)$ have the same absolute error, i.e., $| \hat{V}_{1} - V^{*}_{1}| = | \hat{V}_{2} - V^{*}_{2}| = 1$. But their performance ratios differ substantially, with $\hat{V}_{1} / V^{*}_{1} = 90\%$ and $\hat{V}_{2} / V^{*}_{2} = 99\% $, respectively. Therefore, ratio bounds can provide complementary information to difference bounds and may offer a more informative characterization of approximation quality in certain settings.

Inspired by various curvature-based ratio bounds in submodular optimization, Liu \emph{et al.}\cite{liuadplcss}~derive a ratio bound for $\hat{V} / V^{*}$ in terms of the \textit{total curvature} and \textit{forward curvature}. However, both curvature quantities are intractable to compute, which makes the corresponding ratio bound difficult to obtain in practice.

%%%%%%%%%%%%%%%%%%%%%%%%%%%%%%%%%%%%%%%%%%%%%%%%%%%%%%%%%%%%%%
%%%%%%%%%%%%%%%%%%%%%%%%%%%%%%%%%%%%%%%%%%%%%%%%%%%%%%%%%%%%%%
\section{Main Results}
\label{sec:main_results}

%%%%%%%%%%%%%%%%%%%%%%%%%%%%%%%%%%%%%%%%%%%%%%%%%%%%%%%%%%%%%%
\subsection{Performance Guarantee}
\label{subsec:main_results_performance}
We are interested in deriving a computable ratio bound of the form $$\frac{\hat{V}}{V^{*}} \geq \beta,$$ for general ADP schemes. 

\vspace{\baselineskip}
\begin{lemma} 
\thlabel{equivalency_lemma}
Deriving a lower bound of the performance ratio is equivalent to deriving an upper bound of the optimal value function $V^{*}$.  
\end{lemma}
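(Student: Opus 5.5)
The plan is to make the equivalence precise as a statement about positive scalars and then check both directions. Throughout we restrict, as stated just before this subsection, to initial states with $V^{*}=V^{*}_{0}(\x_{0})>0$. By \thref{V_star_with_initial}, $\hat{V}\ge 0$. The ratio $\hat{V}/V^{*}$ is therefore well defined and lies in $[0,1]$. The precise claim to prove is the following. For a given ADP scheme with (computable) value $\hat{V}$, and any constant $\beta\in(0,1]$,
\begin{equation*}
\frac{\hat{V}}{V^{*}}\ge \beta \quad\Longleftrightarrow\quad V^{*}\le \frac{\hat{V}}{\beta}.
\end{equation*}
More generally, any upper bound $V^{*}\le U$ with $U>0$ gives the ratio bound $\hat{V}/V^{*}\ge \hat{V}/U=:\beta$, and conversely.

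\textbf{Forward direction.} Suppose we have derived $V^{*}\le U$ for some quantity $U>0$. Dividing the inequality $\hat{V}\ge 0$ appropriately, we use that the map $t\mapsto \hat{V}/t$ is nonincreasing on $(0,\infty)$ because $\hat{V}\ge 0$. This gives
\begin{equation*}
\frac{\hat{V}}{V^{*}}\ge \frac{\hat{V}}{U}.
\end{equation*}
So $\beta=\hat{V}/U$ is a lower bound on the performance ratio.

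\textbf{Converse direction.} Suppose $\hat{V}/V^{*}\ge\beta$ with $\beta>0$. Multiplying by $V^{*}>0$ and dividing by $\beta$ gives $V^{*}\le \hat{V}/\beta$, which is an upper bound on $V^{*}$. The degenerate case $\beta=0$ carries no information: it corresponds to the trivial bound $U=\infty$, and I would remark on this separately.

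The only subtle point, which is more interpretive than technical, is computability. $\hat{V}$ is the objective value of the ADP scheme itself, so it can be evaluated, for instance by simulation or exact evaluation of the fixed policy. Hence, given $\hat{V}$, the quantities $U$ and $\beta$ determine each other, and the problem of computing a ratio bound reduces entirely to bounding the intractable $V^{*}$ from above. I would emphasize that this is what motivates the subsequent results. These should upper-bound $V^{*}$ via telescoping of $Q^{*}_{k}$ and $W^{*}_{k}$, using \eqref{link_QW_exact} and the ADP counterparts in \eqref{link_QW_approx}.
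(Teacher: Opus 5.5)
Your proposal is correct and uses essentially the same argument as the paper. That argument is the algebraic equivalence $\hat{V}/V^{*}\geq\beta \iff V^{*}\leq \hat{V}/\beta$ plus the observation that any upper bound $V^{*}\leq\overline{V}$ gives $\hat{V}/V^{*}\geq \hat{V}/\overline{V}$; you additionally make the positivity conditions ($V^{*}>0$, $\beta>0$, $\hat{V}\geq 0$) explicit, and as a minor aside outside the lemma, the later telescoping in \eqref{V*_decompose} uses $\hat{Q}_{k}$ and $\hat{W}_{k}$ along the optimal path, not $Q^{*}_{k}$ and $W^{*}_{k}$.
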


\begin{proof}
    By simple algebra, $\hat{V} / V^{*} \geq \beta$ if and only if $V^{*} \leq \hat{V} / \beta$. Furthermore, if an upper bound on $V^{*}$ is available, i.e., $V^{*} \leq \overline{V}$, then $\hat{V} / V^{*} \geq \hat{V} / \overline{V}$, which provides a lower bound on the performance ratio. 
\end{proof}
    
\vspace{\baselineskip}
The approximation error arises from the fact that the approximate Q-value $\hat{Q}_{k}$, with its associated ADP policy $\hat{\pi}_{k}$ and action $\hat{\bmu}_{k}$, is used at each step. The overall error will be viewed as the accumulation of stepwise errors, which can be decomposed in the expression of $V^{*}$ as follows: 
\begin{align}
\label{V*_decompose}
& V^{*}  = \sum_{k=0}^{H-1}\mathbb{E}\left[ r_{k}\left( \x_{k}^{*},\bmu_{k}^{*} \right) \right]  + \mathbb{E}\left[ r_{H}(\x_{H}^{*}) \right]  \\
& \hspace{-0.4em} \stackrel{(a)}{=} \sum_{k=0}^{H-1}\mathbb{E} \left[  r_{k} \left(\x_{k}^{*},\bmu_{k}^{*} \right) +  \hat{W}_{k+1}(\x_{k}^{*},\bmu_{k}^{*}) - \hat{W}_{k}(\x_{k-1}^{*},\bmu_{k-1}^{*}) \right]   \notag \\
& \hspace{-0.4em} \stackrel{(b)}{=} \hat{Q}_{0}(\x_{0},\bmu_{0}^{*}) + \sum_{k=1}^{H-1} \mathbb{E}\left[ \hat{Q}_{k}(\x_{k}^{*},\bmu_{k}^{*}) - \hat{W}_{k}(\x_{k-1}^{*},\bmu_{k-1}^{*}) \right]. \notag
\end{align}

Here, the approximate EVTG term $\hat{W}_{k}$ is a deterministic function of its state-action input. Once $\x_{k}^{*}$ and $\bmu_{k}^{*}$ are fixed, $\hat{W}_{k+1}(\x_{k}^{*},\bmu_{k}^{*})$ is fixed and can be regarded as already averaged. Therefore, the expectation $\mathbb{E}$ is averaging with respect to $r_{k}(\x_{k}^{*},\bmu_{k}^{*})$ term in $\hat{Q}_{k}(\x_{k}^{*},\bmu_{k}^{*})$. Equality (a) holds because of the telescoping sum, in which $\hat{W}_{0}$ is undefined and can be interpreted as 0, and $\hat{W}_{H}(\x_{H-1}^{*},\bmu_{H-1}^{*}) = \mathbb{E}\left[ r_{H}(\x_{H}^{*})\right]$. For equality (b), $r_{0}(\x_{0},\bmu_{0}^{*})$ can be taken outside the expectation $\mathbb{E}$ since $\x_{0}$ is given and $r_{0}(\x_{0},\bmu_{0}^{*})$ is fixed. Grouping $r_{0}$ and $\hat{W}_{1}$ terms yields $\hat{Q}_{0}$. 

In practice, both $\hat{Q}_{0}$ and each term within the summation after equality (b) in equation \eqref{V*_decompose} are intractable to compute when state-action inputs are denoted with superscript $^{*}$. To derive a computable upper bound on $V^{*}$, each of those individual terms can be upper bounded by a computable quantity. The details of the bounding theorem are as follows. 

\vspace{\baselineskip}
\begin{theorem}
\thlabel{bound_theorem}
    Assuming 
    \begin{itemize}
        \item $\bmu_{k}^{*} \in \mathcal{U}_{k}(\hat{\x}_{k})$ for $k = 0,\ldots,H-1$; 
        \item There exists a quantity $\epsilon_{k}$ such that, 
        $$\mathbb{E} \left[ \hat{Q}_{k}(\x_{k}^{*},\bmu_{k}^{*}) - \hat{W}_{k}(\x_{k-1}^{*},\bmu_{k-1}^{*}) \right] \leq \epsilon_{k}$$
        for $k = 1,\ldots,H-1$. 
    \end{itemize}
    Then, we have
    $$
    V^{*} \leq \hat{Q}_{0}(\x_{0},\hat{\bmu}_{0}) + \sum_{k = 1}^{H-1} \epsilon_{k} = \overline{V}, 
    $$
    and the ratio bound $\beta = \hat{V} / \overline{V} \leq \hat{V} / V^{*}.$
\end{theorem}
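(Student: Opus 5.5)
The plan is to start from the decomposition \eqref{V*_decompose} and bound its two kinds of terms separately. After equality (b), $V^{*}$ equals $\hat{Q}_{0}(\x_{0},\bmu_{0}^{*})$ plus the sum over $k=1,\ldots,H-1$ of $\mathbb{E}[\hat{Q}_{k}(\x_{k}^{*},\bmu_{k}^{*}) - \hat{W}_{k}(\x_{k-1}^{*},\bmu_{k-1}^{*})]$. The second assumption of the theorem bounds each summand directly by $\epsilon_{k}$, so the sum is at most $\sum_{k=1}^{H-1}\epsilon_{k}$. This step is immediate once the decomposition is in place.

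What remains is the first term, $\hat{Q}_{0}(\x_{0},\bmu_{0}^{*})$, which is not computable because it involves the unknown optimal action. Here I use the first assumption at $k=0$. Since $\hat{\x}_{0}=\x_{0}=\x_{0}^{*}$, it gives $\bmu_{0}^{*}\in\mathcal{U}_{0}(\hat{\x}_{0})$, so $\bmu_{0}^{*}$ is a feasible candidate in the maximization that defines $\hat{\bmu}_{0}$ in \thref{adp_policy_def}. Because $\hat{\bmu}_{0}$ maximizes $\hat{Q}_{0}(\x_{0},\cdot)$ over $\mathcal{U}_{0}(\x_{0})$, we get
\begin{equation*}
\hat{Q}_{0}(\x_{0},\bmu_{0}^{*}) \leq \hat{Q}_{0}(\x_{0},\hat{\bmu}_{0}).
\end{equation*}
If the optimal policy is randomized, I apply this inequality for each realization of $\bmu_{0}^{*}$ and then take expectations; the right-hand side is deterministic. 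Adding the two bounds yields $V^{*}\leq \hat{Q}_{0}(\x_{0},\hat{\bmu}_{0})+\sum_{k=1}^{H-1}\epsilon_{k}=\overline{V}$.

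Finally, I invoke \thref{equivalency_lemma}. We restrict to $V^{*}>0$, so $\overline{V}\geq V^{*}>0$, and $\hat{V}\geq 0$ by \thref{V_star_with_initial}. Together these give $\hat{V}/V^{*}\geq \hat{V}/\overline{V}=\beta$.

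The main obstacle is making sure the decomposition \eqref{V*_decompose} is legitimate, not the bounding itself. Two points need care:
\begin{itemize}
\item The telescoping convention $\hat{W}_{0}\equiv 0$ must hold.
\item The identification $\hat{W}_{H}(\x_{H-1}^{*},\bmu_{H-1}^{*})=\mathbb{E}[r_{H}(\x_{H}^{*})\mid \x_{H-1}^{*},\bmu_{H-1}^{*}]$ must hold; it then has to be averaged via the tower property to match $\mathbb{E}[r_{H}(\x_{H}^{*})]$.
\end{itemize}
I would verify both explicitly, and check that every expectation is finite so that linearity applies. Note that only the $k=0$ instance of the first assumption is actually used. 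The instances for $k\geq 1$ are presumably there to make the $\epsilon_{k}$ meaningful in later applications, and are not needed for this inequality.
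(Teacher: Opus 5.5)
Your proposal is correct and follows the paper's proof. You bound each summand of \eqref{V*_decompose} by $\epsilon_{k}$, bound $\hat{Q}_{0}(\x_{0},\bmu_{0}^{*})$ by $\hat{Q}_{0}(\x_{0},\hat{\bmu}_{0})$ through the maximization that defines $\hat{\bmu}_{0}$, and conclude with \thref{equivalency_lemma}. Your side remarks are also sound and slightly sharpen the paper: the telescoping decomposition does not actually use the feasibility assumption, even though the paper's proof says that assumption is what makes it valid; the $k=0$ instance holds automatically because $\hat{\x}_{0}=\x_{0}^{*}$ and $\bmu_{0}^{*}\in\mathcal{U}_{0}(\x_{0}^{*})$; and the remaining instances matter only for the particular construction of $\epsilon_{k}$ in \eqref{epsilon}.
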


\begin{proof}
    Since $(\hat{\x}_{0},\hat{\x}_{1},\ldots,\hat{\x}_{H})$ is a realization of the state path when executing an ADP scheme, the set $\mathcal{U}_{k}(\hat{\x}_{k})$ represents the subset of feasible actions available at time step $k$ for that ADP scheme. Therefore, the assumption $\bmu_{k}^{*} \in \mathcal{U}_{k}(\hat{\x}_{k})$ guarantees that the optimal action $\bmu_{k}^{*}$ remains feasible under that ADP scheme at every step, which ensures that the decomposition in equation \eqref{V*_decompose} is valid. 
    
    Then, by combining equation \eqref{V*_decompose}, the definition of $\hat{\bmu}_{0}$ in \thref{adp_policy_def}, and the above assumption regarding $\epsilon_{k}$, we can obtain the upper bound $\overline{V}$. 
    
    By using the equivalency between the lower bound of the performance ratio and the upper bound of $V^{*}$ in \thref{equivalency_lemma}, we can obtain the ratio bound $\beta = \hat{V} / \overline{V} \leq \hat{V} / V^{*}.$
\end{proof}

\vspace{\baselineskip}
\begin{remark}
    In practice, computing $\bmu_{k}^{*}$ is often intractable, making it difficult to verify whether the assumption $\bmu_{k}^{*} \in \mathcal{U}_{k}(\hat{\x}_{k})$ holds. However, this assumption is automatically satisfied in certain cases, for instance, when the feasible control action set is independent of both time step and state. 
\end{remark}

\vspace{\baselineskip}
When dealing with a specific problem, we can set the quantity $\epsilon_{k}$ in \thref{bound_theorem} according to the requirement of that problem. One possible way to construct a computable $\epsilon_{k}$ is described as follows. We first set 
\begin{align}
\label{delta}
    &\delta_{k}(\x_{k-1},\bmu_{k-1}) = \mathbb{E} \left[ \left.\max_{\bmu_{k}\in \mathcal{U}_{k}(\x_{k})}  \hat{Q}_{k}(\x_{k},\bmu_{k}) \right| \x_{k-1},\bmu_{k-1} \right] \notag  \\
    & \hspace{2.7cm} - \hat{W}_{k}(\x_{k-1},\bmu_{k-1}) \notag \\
    & \text{subject to} \quad  \x_{k} = h_{k-1}(\x_{k-1},\bmu_{k-1},\w_{k-1}) \\
    & \qquad \quad \quad \bmu_{k-1} \in \mathcal{U}_{k-1}(\x_{k-1}), \notag
\end{align}
and 
\begin{equation}
\label{epsilon}
\epsilon_{k} = \max_{\substack{\x_{k-1} \in \mathcal{X} \\ \bmu_{k-1} \in \mathcal{U}_{k-1}(\hat{\x}_{k-1})}} \delta_{k}(\x_{k-1},\bmu_{k-1}),
\end{equation}
for $k = 1,\ldots,H-1$. The setup of $\delta_{k}(\x_{k-1},\bmu_{k-1})$ captures the stepwise approximation error induced by using approximate EVTG (approximate Q-value). This term characterizes the discrepancy between $\hat{Q}_{k}$ and $\hat{W}_{k}$, as formalized in equation \eqref{link_QW_approx}. Better ADP schemes, in general, lead to smaller magnitude of $\delta_{k}(\x_{k-1},\bmu_{k-1})$. Moreover, by conditioning $\hat{Q}_{k}$ on the distribution of $(\x_{k-1},\bmu_{k-1})$, the stepwise error $\delta_{k}(\x_{k-1},\bmu_{k-1})$ becomes an explicit function of the state-action pair $(\x_{k-1},\bmu_{k-1})$. This formulation enables us to maximize $\delta_{k}(\x_{k-1},\bmu_{k-1})$ over $(\x_{k-1},\bmu_{k-1})$ in order to obtain the upper bound $\epsilon_{k}$, which is also computationally tractable.

Next, each term within the summation after equality (b) in equation \eqref{V*_decompose} can be rewritten as:
\begin{align}
\label{stepwise_error_bound}
& \; \mathbb{E}\left[ \hat{Q}_{k}(\x_{k}^{*},\bmu_{k}^{*}) - \hat{W}_{k}(\x_{k-1}^{*},\bmu_{k-1}^{*}) \right]  \\ 
& \stackrel{(c)}{=} \mathbb{E}\left[ \mathbb{E} \left(  \left. \hat{Q}_{k}(\x_{k}^{*},\bmu_{k}^{*}) \right| \x_{k-1}^{*},\bmu_{k-1}^{*} \right)  - \hat{W}_{k}(\x_{k-1}^{*},\bmu_{k-1}^{*}) \right] \notag \\
& \stackrel{(d)}{\leq} \mathbb{E}\left[ \delta_{k}(\x_{k-1}^{*},\bmu_{k-1}^{*}) \right] \stackrel{(e)}{\leq} \mathbb{E}\left[ \epsilon_{k} \right] = \epsilon_{k}. \notag
% & \leq \mathbb{E}\left[ \epsilon_{k}(\x_{k-1}^{*},\bmu_{k-1}^{*}) \right] \leq \max_{\x_{k-1},\bmu_{k-1}} \epsilon_{k}(\x_{k-1},\bmu_{k-1}), \notag
\end{align}
The equality (c) in equation \eqref{stepwise_error_bound} is due to the law of total expectation. The inner expectation $\mathbb{E}$ is averaged over $\hat{Q}_{k}$, while the outer expectation integrates over the distribution of the state-action input $(\x_{k-1}^{*}, \bmu_{k-1}^{*})$. The inequalities (d) and (e) stem from the above setup of $\delta_{k}(\x_{k-1},\bmu_{k-1})$ and $\epsilon_{k}$. Lastly, $\epsilon_{k}$ is a constant after implementing the ``max'' operator on $\delta_{k}(\x_{k-1},\bmu_{k-1})$, which implies that the expectation sign $\mathbb{E}$ can be dropped in the last step of equation \eqref{stepwise_error_bound}.

Plugging equations \eqref{epsilon} and \eqref{stepwise_error_bound} into the conclusion in \thref{bound_theorem}, we can obtain: 
\begin{align}
\label{V*_bound_epsilon}
    V^{*} & \leq \hat{Q}_{0}(\x_{0},\hat{\bmu}_{0}) + \sum_{k = 1}^{H-1} \epsilon_{k} \\
    & = \hat{Q}_{0}(\x_{0},\hat{\bmu}_{0}) + \sum_{k=1}^{H-1} \max_{\substack{\x_{k-1} \in \mathcal{X} \\ \bmu_{k-1} \in \mathcal{U}_{k-1}(\hat{\x}_{k-1})}} \delta_{k}(\x_{k-1},\bmu_{k-1}). \notag
\end{align}
Well-designed ADP schemes generally yield small magnitudes of $\delta_{k}(\x_{k-1},\bmu_{k-1})$ and $\epsilon_{k}$. Moreover, such schemes typically ensure that $\hat{Q}_{0}(\x_{0},\hat{\bmu}_{0}) \gg \epsilon_{k}$ in \eqref{V*_bound_epsilon}, thereby leading to a tight upper bound of $V^{*}$. This will be reflected later in Section \ref{subsec:applications:LQG}. A special case arises when $\hat{W}_{k}$ is replaced by $W^{*}_{k}$ in equations \eqref{V*_decompose} and \eqref{delta}. In this setting, $\delta_{k}(\x_{k-1},\bmu_{k-1}) = \epsilon_{k} = 0$, and the decomposition of $V^{*}$ simply reduces to $V^{*} = Q^{*}_{0}(\x_{0},\bmu_{0}^{*})$.

\vspace{\baselineskip}
\begin{remark}
    Our proposed bounding framework in \thref{bound_theorem} involves the stepwise upper bound $\epsilon_{k}$. Certain situations provide $\epsilon_k$ without digital computation (e.g., an analytically derived $\epsilon_k$). In cases where we wish to compute $\epsilon_k$ numerically, we do not claim that the computation is always trivial. Equations \eqref{delta} and \eqref{epsilon} provide a general method to compute $\epsilon_k$ based on solving an optimization problem over a single time step. But this is not the only possible method. Depending on the ADP scheme, bounding the approximation error $\mathbb{E} [ \hat{Q}_{k}(\x_{k}^{*},\bmu_{k}^{*}) - \hat{W}_{k}(\x_{k-1}^{*},\bmu_{k-1}^{*}) ]$ might involve nonconvex or high-dimensional optimization. However, this optimization is much simpler than solving the original dynamic programming problem. Specifically, it involves the approximation error in a single time step and does not require a Bellman recursion or computing the optimal value-to-go function over the entire decision horizon.

    Furthermore, our result in \thref{bound_theorem} only requires an upper bound on the stepwise approximation error, rather than its exact value. Therefore, any optimization, relaxation, or approximation technique capable of producing a valid upper bound can be used under the proposed framework.
\end{remark}

\vspace{\baselineskip}
All the problem formulation and theoretical results developed so far focus on obtaining an upper bound $\overline{V}$ for $V^{*}$. If the objective in problem \eqref{mdp_form} is instead to minimize the expected cumulative cost, the same theoretical framework applies, with the only modification being that all ``max'' operators are replaced by ``min'' operators. In this case, we seek an lower bound $\underline{V}$ for $V^{*}$. One of the applications in the subsequent Section~\ref{sec:applications} deals with cost minimization and finding the lower bound $\underline{V}$.

%%%%%%%%%%%%%%%%%%%%%%%%%%%%%%%%%%%%%%%%%%%%%%%%%%%%%%%%%%%%%%
\subsection{Technical Distinction from Difference Bound Analysis}

\subsubsection{Distinction in Assumptions and Proof Techniques}
Although both the difference bound of Bertsekas in \cite{bertsekas2012dynamic} and the ratio bound developed in this paper characterize the gap between $\hat{V}$ and $V^{*}$, the underlying assumptions and proof techniques are fundamentally different. We highlight these differences in this subsection using our notation in this paper. 

In \cite{bertsekas2012dynamic}, a key assumption underlying Bertsekas' difference bound analysis is as follows: For all $k \text{ and } \x_{k}$, $| \hat{V}_{k}(\x_{k})-V^{*}_{k}(\x_{k}) | \leq \epsilon $. Under this assumption, Bertsekas first establishes the bound of stepwise policy loss 
\begin{equation}
\label{Bertsekas_stepwise_error}
V_{k}^{*}(\x_{k}) - Q^{*}_{k}(\x_{k},\hat{\bmu}_{k}) \leq 2\epsilon,
\end{equation}
and then accumulates the above stepwise loss over the decision horizon to obtain
\begin{equation}
\label{Bertsekas_difference_bound}
V^{*} - \hat{V} \leq 2H\epsilon. 
\end{equation}

In contrast, our framework our framework does not involve a given bound on the difference between the approximate and optimal value-to-go functions. Moreover, our stepwise approximation error is based on the quantity
$$\mathbb{E} \left[ \hat{Q}_{k}(\x_{k}^{*},\bmu_{k}^{*}) - \hat{W}_{k}(\x_{k-1}^{*},\bmu_{k-1}^{*}) \right],$$
as shown in \thref{bound_theorem}. Therefore, both the assumptions and proof techniques in this paper are different from the analysis in Bertsekas' difference bound.

\vspace{\baselineskip}
\subsubsection{Relationship Between Ratio and Difference Bounds} 
A ratio bound straightforwardly implies a corresponding difference bound, as follows. Suppose that $\hat{V} / V^{*} \geq \beta$. Then $V^{*} \leq \hat{V} / \beta$, which immediately yields
\begin{equation}
\label{ratio_to_diff}
V^{*} - \hat{V} \leq \frac{\hat{V}}{\beta} - \hat{V}. 
\end{equation}

Conversely, consider Bertsekas' difference bound in \eqref{Bertsekas_difference_bound}. Dividing both sides by $V^{*}$ gives
\begin{align}
\label{diff_to_ratio}
\frac{V^{*} - \hat{V}}{V^{*}} & = 1-\frac{\hat{V}}{V^{*}} \leq \frac{2H\epsilon}{V^{*}} ; \notag \\
\frac{\hat{V}}{V^{*}} & \geq 1- \frac{2H\epsilon}{V^{*}} \geq 1 - \frac{2H\epsilon}{\hat{V}}.
\end{align}
Compared with \eqref{ratio_to_diff}, the derivation in \eqref{diff_to_ratio} not only requires the uniform error bound $\epsilon$ for the approximate value-to-go function but also an additional lower bound for $V^{*}$, which is just $\hat{V}$ in \eqref{diff_to_ratio}. Consequently, the resulting ratio bound depends on multiple layers of relaxation and might be loose. Therefore, based on the less restrictive assumptions and the contrast between ratio and difference bounds, our framework is applicable to a broader class of ADP schemes. 
% This suggests that a ratio bound, in general, cannot be obtained naturally from an existing difference bound. 

%%%%%%%%%%%%%%%%%%%%%%%%%%%%%%%%%%%%%%%%%%%%%%%%%%%%%%%%%%%%%%
\subsection{Connection to Submodular Optimization}
\label{subsec:main_results_submodular}
We develop a ratio bound in one of the previous subsections, which resonates the seminal ratio bound for greedy algorithms in submodular optimization. In this subsection, we show that our ADP bounding framework generalizes those results in submodular optimization. 

\subsubsection{Existing Results}
Let $f: 2^{\mathcal{U}} \xrightarrow{}  \mathbb{R}_{\geq 0}$ be a set function satisfying the following conditions: 
\begin{itemize}
    \item \textbf{Null on null:} $f(\varnothing) = 0$.
    \item \textbf{Monotonicity:} For every $X \subseteq Y \subseteq \mathcal{U}$, we have $f(X) \leq f(Y)$. 
    \item \textbf{Submodularity:} For every $X \subseteq Y \subseteq \mathcal{U}$ and every $x \in \mathcal{U} \setminus Y$, we have $f(X \cup \{ x\}) - f(X) \geq f(Y \cup \{ x\}) - f(Y)$.
\end{itemize}
Such a function $f$ is called a monotone submodular set function. 

A classical problem in submodular optimization is: 
\begin{equation}
\label{sub_opt}
\begin{aligned}
    & \text{maximize } f(S), \text{ subject to } |S| \leq H \text{ and } S \subseteq \mathcal{U}.
\end{aligned}
\end{equation} 
Since $S$ is a set, it can be written as $S = \{s_{0},s_{1},\ldots,s_{H-1} \}$, starting from horizon index 0. The seminal result from Nemhauser et al. \cite{nemhauser1978analysis} shows that the greedy algorithm, which selects the element maximizing the incremental reward at each step, produces a solution whose objective value is at least a $(1-e^{-1})$-fraction of the optimal value, namely, 
$$\frac{f(G_{H-1})}{f(O_{H-1})} \geq \beta = 1-e^{-1},$$ 
where $G_{H-1} = \{g_{0}, g_{1},\ldots,g_{H-1}\}$ and $O_{H-1} = \{o_{0},o_{1},\ldots,o_{H-1}\}$ denote the solutions obtained by the greedy algorithm and the optimal policy up to horizon index $H-1$, respectively. The $\beta = (1-e^{-1})$ bound was later proven to hold when maximizing a monotone submodular function under a matroid constraint \cite{calinescu2011maximizing}. 

To improve upon the $\beta = (1-e^{-1})$ bound in certain cases, various notions of curvatures have been proposed, along with their corresponding performance bounds. One seminal and computable notion is the \textit{greedy curvature} proposed in \cite{conforti1984submodular}, which is defined as: 
$$
\gamma_{G} = \max_{1\leq k \leq H-1}\max_{ \bigl\{ s | \Delta \left( G_{k-1} \cup \{s\} \right)  > 0 \bigl\} } \Bigl\{ \frac{f(s)}{\Delta(G_{k-1} \cup \{s\})} \Bigl\},
$$
where $\Delta(G_{k-1} \cup \{s\}) = f(G_{k-1} \cup \{s\}) - f(G_{k-1})$ denotes the marginal gain by adding element $\{s\}$ to the greedy solution of horizon $k-1$. The performance bound associated with the greedy curvature is given by:
$$
\frac{f(G_{H-1})}{f(O_{H-1})} \geq \beta = \frac{1}{H} + \frac{1}{\gamma_{G}}\frac{H-1}{H}. 
$$

More recently, there have been efforts trying to extend the results on submodular set functions to submodular string (or sequence) functions, in which the order of the elements matters, and repeating elements are allowed in a string. In this setting, the set $S$ in problem \eqref{sub_opt} is replaced by a string that is written as $S = s_{0}s_{1}\cdots s_{H-1}$. Accordingly, the subset relation $\subseteq$ in the set case is replaced by the prefix relation $\preccurlyeq$, and the union operator $\cup$ is replaced by the concatenation operator $\oplus$, which appends elements or strings after an existing string.

Streeter and Golovin extend the $\beta = (1-e^{-1})$ bound to string submodular functions \cite{streeter2008online}. Subsequently, we extend the greedy curvature bound to string settings \cite{van2023improved}, and later show that a simple and computable bound is provably better than the greedy curvature bound \cite{li2024bounds,van2025performance}.

\vspace{\baselineskip}
\subsubsection{Generalization} 
In the setting of stochastic optimal control (or equivalently, Markov decision process) that we primarily study in this paper, different orders of the same set of control actions usually lead to different objective values. Based on this observation, we show how our ADP bounding framework generalizes the results from string submodular optimization. 

\textbf{Formulation under the MDP Framework:} Let $S_{i} = s_{0}s_{1}\cdots s_{i}$ for $i = 0,\ldots,H-1$, where $S_{i}$ denotes the substring consisting of the elements up to horizon index $i$ of a string $S$. Using the technique of telescoping sum, the objective function $f(S)$ can be decomposed into the sum of $H$ stepwise reward: 
\begin{equation}
\label{sub_decompose}
f(S) = \Delta(S_{0}) + \Delta(S_{1}) + \cdots + \Delta(S_{H-1}),
\end{equation}
where $\Delta(S_{i}) = f(S_{i}) - f(S_{i-1})$ for $i = 0,\ldots,H-1$ with $S_{-1} = \varnothing$.

With the decomposition, the submodular optimization problem in \eqref{sub_opt} can be reformulated as a simplified version of the stochastic optimal control: 
\begin{align}
\label{sub_mdp}
\max_{ (s_{0},s_{1},\ldots,s_{H-1}) } & \sum_{k = 0}^{H-1} \Delta(S_{k}) \notag \\
\text{subject to} \quad & \x_{k+1} = h_{k}(\x_{k},s_{k}), \; s_{k} \in \mathcal{U}_{k}(\x_{k}),  \\
& k = 0,\ldots,H-1. \notag
\end{align}
Here, the state variable is defined as $\x_{k} = S_{k-1}$, representing the sequence of actions taken up to the previous time step $k-1$, and $\mathcal{U}_{k}(\x_{k})$ denotes the feasible set of actions at time step $k$, given the previously selected actions. Unlike the general MDP formulation in \eqref{mdp_form}, the expectation operator $\mathbb{E}$ and the random variables $\w_{k}$ are omitted, since both the objective function and the transition dynamics are deterministic in the context of submodular optimization. 

\textbf{Greedy Algorithm and Classical Bound:} Under the simplified setting in \eqref{sub_mdp}, the quantities defined in Section~\ref{sec:preliminaries} can be expressed using the notation of submodular optimization. In particular, the stepwise reward is given by 
$$
r_{k}(\x_{k},s_{k}) = \Delta(S_{k}) \text{ for } k = 0,\ldots,H-1.
$$ 
The expected-value-to-go (EVTG) term can be written as: 
$$
W^{*}_{k+1}(\x_{k},s_{k}) = \max_{ \{S : |S| = H-k \}} \bigl[ f( \x_{k} \oplus s_{k}  \oplus S ) - f( \x_{k} \oplus s_{k} ) \bigl]
$$
for $k = 0,\ldots,H-2$.  Accordingly, the exact $Q$-value and the dynamic programming equations under Bellman's principle of optimality can be formulated in terms of the above $r_{k}$ and $W_{k+1}^{*}$.

The widely used greedy algorithm can be interpreted as a special kind of ADP scheme, in which the EVTG term $W^{*}_{k+1}(\x_{k},s_{k})$ is neglected, i.e., $\hat{W}_{k+1}(\x_{k},s_{k}) = 0$. The greedy algorithm selects actions by maximizing only the stepwise reward $\Delta(S_{k})$ (equivalently, the $r_{k}(\x_{k},s_{k})$ term) at each time step. Following the notation of $G_{H-1}$ and $O_{H-1}$ introduced previously, let $O_{H-1} = o_{0}o_{1}\cdots o_{H-1}$ denote the optimal sequence of actions for problem \eqref{sub_mdp}, with associated optimal objective value $V^{*}$, and let $G_{H-1} = g_{0} g_{1} \cdots g_{H-1}$ denote the sequence of actions obtained by the greedy algorithm, with associated objective value $\hat{V}$. It then follows that
$$
\frac{\hat{V}}{V^{*}} = \frac{f(G_{H-1})}{f(O_{H-1})} \geq \beta = 1-e^{-1}.
$$

\textbf{Our Bounding Framework:} In addition to the classical $\beta = (1-e^{-1})$ bound for the greedy algorithm, we can apply our bounding framework developed in Section~\ref{subsec:main_results_performance} to problem \eqref{sub_mdp} to further quantify the performance of greedy algorithm. 

When using greedy algorithm as a special case of ADP scheme, the approximate EVTG term $\hat{W}_{k+1}(\x_{k},s_{k}) = 0$. As a result, equation \eqref{V*_decompose} reduces to: 
\begin{align}
\label{V*_decompose_sub}
V^{*} & = r_{0}(\varnothing,o_{0}) + \sum_{k=1}^{H-1} \mathbb{E} \bigl[ r_{k}(O_{k-1},o_{k}) \bigl] \notag \\
& = \sum_{k=0}^{H-1}r_{k}(O_{k-1},o_{k}) = \sum_{k=0}^{H-1} \Delta(O_{k}), 
\end{align}
where $O_{-1} = \varnothing$ and $O_{k} = o_{0}\cdots o_{k}$ for $k = 0,\ldots,H-1$. The expectation operators are omitted in \eqref{V*_decompose_sub} because both the stepwise reward functions and the transition dynamics are deterministic. By applying \thref{bound_theorem}, we can compute an upper bound for each $r_{k}(O_{k-1},o_{k})$ term, and thereby obtain an overall upper bound on $V^{*}$, denoted by $\overline{V}$. Specifically, 
\begin{align}
& r_{0}(\varnothing,o_{0}) = f(o_{0}) \leq \max_{s \in \mathcal{U}_{0}(\varnothing)} f(s) = f(g_{0}); \\
& r_{k}(O_{k-1},o_{k}) =  \Delta(O_{k}) = f(O_{k}) - f(O_{k-1}) \notag \\
& \leq \max_{s \in \mathcal{U}_{k}(O_{k-1})} \bigl[ f(O_{k-1} \oplus s) - f(O_{k-1}) \bigl] \stackrel{(a)}{\leq} f(s_{k}^{*})
\end{align}
for $k = 1,\ldots,H-1$, where 
$$
s_{k}^{*} = \argmax_{s \in \mathcal{U}_{k}(O_{k-1})} \bigl[ f(O_{k-1} \oplus s) - f(O_{k-1})\bigl].
$$
Inequality (a) above follows from the submodularity property of $f$, while the remaining inequalities above follow from the definitions of the ``max'' operator. Consequently, 
\begin{equation}
\label{V_bar_bound_sub}
V^{*} \leq f(g_{0}) + \sum_{k=1}^{H-1} f(s_{k}^{*}) \leq \sum_{k=0}^{H-1} \max_{s\in \mathcal{U}_{k}(S_{k-1})} f(s) = \overline{V},
\end{equation}
where $S_{k-1}$ in \eqref{V_bar_bound_sub} denotes the string formed up to horizon index $k-1$, with each action selected via its corresponding max operator. 

% \begin{remark}
%     To strictly follow the bound computation in \thref{bound_theorem}, 
% \end{remark}

Intuitively, we select $H$ actions that are feasible to be concatenated as a string and that individually yield the $H$ largest reward at the first time step. The corresponding performance bound $\beta = f(G_{H-1}) / \overline{V}$ can be referred to as the \textbf{top-$H$ bound}. This result coincides with our bounding result for string submodular optimization in \cite{van2025performance}, thereby implying that our bounding framework developed in Section~\ref{subsec:main_results_performance} generalizes the result in submodular optimization. Moreover, our prior work shows that the bound $\overline{V}$ is provably better than the seminal greedy curvature bound mentioned earlier \cite{van2025performance}.

%%%%%%%%%%%%%%%%%%%%%%%%%%%%%%%%%%%%%%%%%%%%%%%%%%%%%%%%%%%%%%
%%%%%%%%%%%%%%%%%%%%%%%%%%%%%%%%%%%%%%%%%%%%%%%%%%%%%%%%%%%%%%
\section{Applications}
\label{sec:applications}

%%%%%%%%%%%%%%%%%%%%%%%%%%%%%%%%%%%%%%%%%%%%%%%%%%%%%%%%%%%%%%
\subsection{Robot Path Planning via Linear Quadratic Gaussian (LQG)}
\label{subsec:applications:LQG}

\subsubsection{Problem Setup}
Consider a simple robot path planning problem shown in Fig.~\ref{robot_path_planning}, where the objective is to steer a robot from a given starting position (origin) to a target location (destination) on a smooth two-dimensional plane. The robot motion obeys classical Newtonian dynamics.

Specifically, the robot starts from an initial state $\x_{0} = [ x_{0},0,y_{0},0]^{T}$, which specifies its initial position ($[x_0,y_0]$) and velocity ($[0,0]$ in this case) in the 2D surface, and is driven toward a desired final state $\x_{f} = [ x_{f},0,y_{f},0]^{T}$ over a horizon of $H$ control actions. Its velocity is decomposed into two components along the $X$ and $Y$ axes, denoted by $\dot{x}$ and $\dot{y}$, respectively. The control inputs applied to the robot consist of forces in the $X$ and $Y$ directions, denoted by $F_{x}$ and $F_{y}$. These forces produce corresponding accelerations $\ddot{x}$ and $\ddot{y}$ in each direction. Additionally, $m$ represents the mass of the robot. 
%and $b$ denotes the coefficient of friction on the 2D surface.

\begin{figure}[hbt!]
    \centering
    \includegraphics[height = 3.5cm]{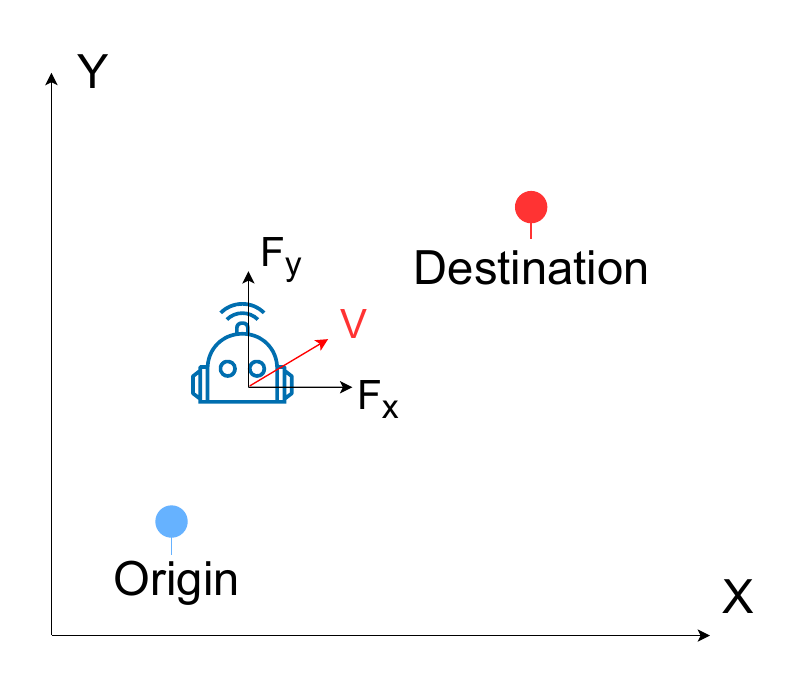}
    \caption{Robot path planning} 
    \label{robot_path_planning}
\end{figure}

Based on the above notation and Newton's second law of motion, the state-space representation of the continuous-time system dynamics is given by
% \begin{equation}
% \label{sys_dyn_cont}
%     F_{x} = m\ddot{x},\; F_{y} = m\ddot{y}.
% \end{equation}

\begin{equation}
\label{sys_dyn_cont}
{\setlength{\arraycolsep}{2pt}\renewcommand{\arraystretch}{0.9}
\begin{bmatrix}
\dot{x}\\
\ddot{x}\\
\dot{y}\\
\ddot{y}
\end{bmatrix} = 
\begin{bmatrix}
0 & 1 & 0 & 0 \\
0 & 0 & 0 & 0 \\
0 & 0 & 0 & 1 \\
0 & 0 & 0 & 0
\end{bmatrix}
\begin{bmatrix}
x\\
\dot{x}\\
y \\
\dot{y}
\end{bmatrix} + 
\begin{bmatrix}
0 & 0\\
1/m & 0\\
0 & 0 \\
0 & 1/m
\end{bmatrix}
\begin{bmatrix}
F_{x}\\
F_{y}
\end{bmatrix} + \w,
}
\end{equation}
where $\w$ is the white Gaussian process noise. 

Equation \eqref{sys_dyn_cont} can be discretized into $H$ time steps with a control action applied within each step and the duration of each time step being $T$. At each time step $k \; (k = 0,\ldots,H-1)$, we define the state vector as $\x_{k} = [x_{k},v_{x_{k}},y_{k},v_{y_{k}}]^{T}$ and the control action vector as $\bmu_{k} = [F_{x_{k}},F_{y_{k}}]^{T}$. The state vector captures the robot position $(x_{k},y_{k})$ and velocity $(v_{x_{k}},v_{y_{k}})$ in both $X$ and $Y$ directions, while the control action vector represents the forces applied in those respective directions. The discrete-time version of system dynamics is as follows: 
\begin{equation}
\label{state_space_dis}
\begin{aligned}
    \x_{k+1} & = A\x_{k} + B\bmu_{k} + \w_{k}, 
\end{aligned}
\end{equation}
%\vspace{\baselineskip}
\noindent where ${\setlength{\arraycolsep}{2pt}\renewcommand{\arraystretch}{0.9}
A=\begin{bmatrix}
1 & T & 0 & 0\\
0 & 1 & 0 & 0\\
0 & 0 & 1 & T\\
0 & 0 & 0 & 1
\end{bmatrix}, \;
B=\begin{bmatrix}
T^2/2m & 0\\ T/m & 0 \\ 0 & T^2/2m\\ 0 & T/m
\end{bmatrix},}$ and $\w_{k} \overset{i.i.d}{\sim} \mathcal{N}(\boldsymbol{0},\Sigma)$ is the process noise acting on the state evolution. 

\textbf{LQG Formulation:} Following the dynamics in equation \eqref{state_space_dis}, the robot path planning problem can be formulated under the linear quadratic Gaussian (LQG) control framework: 
\begin{align}
\label{LQG_ori}
    \argmin_{\pi_{0},\ldots,\pi_{H-1}} V &= \sum_{k = 0}^{H-1} \mathbb{E}  \left[ (\x_{k}-\x_{f})^{T}Q(\x_{k}-\x_{f}) + \bmu_{k}^{T}R\bmu_{k} \right] \notag \\
    & \quad \; + \mathbb{E} \left[ (\x_{H}-\x_{f})^{T}Q_{f}(\x_{H}-\x_{f}) \right] \notag \\
\text{s.t. } \x_{k+1} & = A\x_{k} + B\bmu_{k} + \w_{k},\;\; \bmu_{k} = \pi_{k}(\x_{k}) \\
\w_{k} &\overset{i.i.d}{\sim} \mathcal{N}(\boldsymbol{0},\Sigma). \notag
\end{align}
Since the goal is to drive the robot toward the target final state $\x_{f}$, the state cost at each step in \eqref{LQG_ori} can be characterized by the distance between the current state and the target final state. Specifically, the distances are modeled by the quadratic forms of $(\x_{k} - \x_{f})$ at each intermediate step and $(\x_{H} - \x_{f})$ at the terminal step, weighted by diagonal matrices $Q$ and $Q_f$, respectively. In addition, a control effort cost is included, defined as a quadratic form of control action $\bmu_{k}$, weighted by a diagonal matrix $R$. This cost structure penalizes both deviations from the target position and excessive control actions, which is very common in the LQG framework.

To solve problem \eqref{LQG_ori} using standard LQG approach, we introduce a variable substitution by setting $\z_{k} = \x_{k} - \x_{f}$ and $\z_{H} = \x_{H} - \x_{f}$. Then problem \eqref{LQG_ori} becomes: 
\begin{align}
\label{LQG_after}
    \argmin_{\pi_{0},\ldots,\pi_{H-1}} V &= \sum_{k = 0}^{H-1} \mathbb{E}  \left[ \z_{k}^{T}Q\z_{k} + \bmu_{k}^{T}R\bmu_{k} + \z_{H}^{T}Q_{f}\z_{H} \right] \notag \\
\text{s.t. } \z_{k+1} & = A\z_{k} + B\bmu_{k} + \dd + \w_{k},\;\; \bmu_{k} = \pi_{k}(\x_{k}) \\
\w_{k} &\overset{i.i.d}{\sim} \mathcal{N}(\boldsymbol{0},\Sigma), \notag
\end{align}
where $\dd = A\x_{f} - \x_{f} = \boldsymbol{0}$ in our setting. Therefore, the exact solution to problem \eqref{LQG_after} is given by the recursion characterized by the algebraic Riccati equation  as follows \cite{anderson2007optimal}:  
% $$
% \begin{aligned}
% P_{N} & = Q_f, \s_{N} = \boldsymbol{0}, c_{N} = 0 \\
% P_{k} & = Q + A^{T}P_{k+1}A - A^{T}P_{k+1}B(R+B^{T}P_{k+1}B)^{-1}B^{T}P_{k+1}A \\
% \s_{k} & = A^{T}(P_{k+1}\dd + \s_{k+1}) - A^{T}P_{k+1}B(R+B^{T}P_{k+1}B)^{-1}B^{T}(P_{k+1}\dd+\s_{k+1})\\ %+ Q\bs{x}_{f}
% c_{k} & =  c_{k+1} + \dd^{T}P_{k+1}\dd + 2\dd^{T}\s_{k+1} + \text{Tr}(\Sigma P_{k+1})\\ %+ \bs{x}_{f}^{T}Q\bs{x}_{f}
% & + (P_{k+1}\dd+\s_{k+1})^{T}B(R + B^{T}P_{k+1}B)^{-1}B^{T}(P_{k+1}\dd+\s_{k+1}) \\
% ~\\
% K_k & = (R + B^{T}P_{k+1}B)^{-1}B^{T}P_{K+1}A \\
% \bmu^{*}_{k} &= -K_{k}\z_{k} - (R + B^{T}P_{k+1}B)^{-1}B^{T}(P_{K+1}\dd+\s_{k+1})
% \end{aligned}
% $$
$$
\begin{aligned}
    & P_{H}  = Q_f, c_{H} = 0. \\
    & \text{For } k = H-1, H-2, \ldots, 0: \\
    & \quad S_{k+1} = R+B^{T}P_{k+1}B,\\
    & \quad P_{k}  = Q + A^{T}P_{k+1}A - A^{T}P_{k+1}BS_{k+1}^{-1}B^{T}P_{k+1}A,\\
    & \quad c_{k}  =  c_{k+1} + \text{Tr}(\Sigma P_{k+1}),\\
    & \quad K_k  = S_{k+1}^{-1}B^{T}P_{k+1}A,\\
    & \quad \bmu^{*}_{k} = -K_{k}\z_{k};\\
    & \text{End for.}
\end{aligned}
$$
\textbf{Value-to-go and EVTG Terms in LQG:} Based on the value-to-go in \thref{value_fun_def}, $V_{k}^{*}(\z_{k}) \; (k = 0,\ldots,H-1)$ in the LQG framework is defined as follows: 
\begin{align}
\label{V_value_fun_def}
& V_{k}^{*}(\z_{k}) = \min_{\pi_{k},\ldots,\pi_{H-1}} \sum_{i = k}^{H-1} \mathbb{E}  \left[ \z_{k}^{T}Q\z_{k} + \bmu_{k}^{T}R\bmu_{k} + \z_{H}^{T}Q_{f}\z_{H} \right] \notag \\
&\text{s.t. } \z_{i+1}  = A\z_{i} + B\bmu_{i} + \w_{i},\;\; \bmu_{i} = \pi_{i}(\x_{i}) \\ 
& \qquad \w_{i} \overset{i.i.d}{\sim} \mathcal{N}(\boldsymbol{0},\Sigma). \notag
\end{align}
The derivation of the above recursion also yields the expression for $V_{k}^{*}(\z_{k})$, given by
\begin{equation}
\label{V_value_fun}
    V_{k}^{*}(\z_{k}) = \z_{k}^{T}P_{k}\z_{k} + c_{k} = \z_{k}^{T}P_{k}\z_{k} + \sum_{i = k+1}^{H}\text{Tr}(\Sigma P_{i}).
\end{equation}
Moreover, we can obtain the analytical expression of the EVTG term $W_{k}^{*}(\z_{k-1}^{*},\bmu_{k-1}^{*})$ as described below. 

Given $\z_{k-1}^{*}$ and $\bmu_{k-1}^{*}$, the transition dynamics in problem \eqref{LQG_after} gives us $\z_{k}^{*} = A\z_{k-1}^{*} + B\bmu_{k-1}^{*} + \w_{k-1}$, which implies $\mathbb{E}(\z_{k}^{*}) = A\z_{k-1}^{*} + B\bmu_{k-1}^{*}$ and $\text{Cov}(\z_{k}^{*}) = \Sigma$. Thus, $W^{*}_{k}(\z^{*}_{k-1},\bmu^{*}_{k-1})$ can be computed analytically based on equation \eqref{W_V_link} from Section \ref{sec:preliminaries}, which links value-to-go and EVTG terms, and equation \eqref{V_value_fun}:
\begin{align}
\label{W_star_lqg}
    & \; W^{*}_{k}(\z^{*}_{k-1},\bmu^{*}_{k-1}) = \mathbb{E} 
    \left[ V_{k}^{*}(\z_{k}^{*}) | \z_{k-1}^{*},\bmu_{k-1}^{*} \right] \notag \\
    & = \mathbb{E} \left[ \left. \z_{k}^{*T}P_{k}\z_{k}^{*} + \sum_{i = k+1}^{H}\text{Tr}(\Sigma P_{i}) \right|  \z^{*}_{k-1},\bmu^{*}_{k-1} \right] \\
    % & = \mathbb{E}\left[\z_{k}^{*T}P_{k}\z_{k}^{*} \right] + \sum_{i = k+1}^{H}\text{Tr}(\Sigma P_{i}) \notag \\
    & \stackrel{(a)}{=} \text{Tr}(P_{k}\Sigma) + (A\z_{k-1}^{*} + B\bmu_{k-1}^{*})^{T}P_{k}(A\z_{k-1}^{*} + B\bmu_{k-1}^{*}) \notag \\
    & \quad \; + \sum_{i = k+1}^{H}\text{Tr}(\Sigma P_{i}), \notag
\end{align}
where the expression of $\mathbb{E}\left[\z_{k}^{*T}P_{k}\z_{k}^{*} \right]$ in equality (a) can be referenced in \cite{petersen2008matrix}.

In the following subsection regarding the experiments, we need to use the analytically computed EVTG quantities from equation \eqref{W_star_lqg} to serve as the labels of our imitation learning method, which will be explained in details shortly.

\vspace{\baselineskip}
\subsubsection{Experiments}
In our experiments, we set the robot dynamics parameters as: 
$$
\begin{aligned}
& \x_{0} = [0,0,0,0]^{T},\; \x_{f} = [100,0,100,0]^{T}, \\
& Q = \text{diag}(10,1,10,1), R = \text{diag}(0.5,0.5),  \\
& Q_{f} = \text{diag}(500,1000,500,1000), \Sigma = \text{diag}(5,2,5,2);
\end{aligned}
$$
and the control-related parameters as: 
$$
\begin{aligned}
    m = 1,\; T = 0.1, \text{ and } H=10.
\end{aligned}
$$
Essentially, the robot starts at rest from the position $[0,0]^{T}$ and is driven to park in the vicinity of target position $[100,100]^{T}$. To enforce the terminal accuracy, the magnitude of the terminal cost matrix $Q_{f}$ is set to be large, heavily penalizing the deviation from the desired final state $\x_{f}$. 

To obtain the ADP functions $\hat{Q}_{k}$ ($\hat{W}_{k}$) and the stepwise error $\epsilon_{k}$, we adopt an imitation learning approach, in which these functions are trained via supervised learning using expert demonstrations. In our setting, those expert demonstrations consists of optimal state-action pairs along with their associated EVTG labels. Imitation learning has been widely applied in autonomous driving \cite{pomerleau1988alvinn}, the training of AlphaGo \cite{silver2017mastering}, and many other sequential decision-making methods. 

In our experiments, we simulate $10^{6}$ optimal state paths generated by optimal sequence of actions obtained from the recursion characterized by the algebraic Riccati equation, in which the $10^{6}$ initial states are i.i.d.\ generated from $\mathcal{N}(\x_{0},\boldsymbol{\mathit{I}})$. These state paths, along with their corresponding action sequences, are partitioned into $H = 10$ clusters of input training datasets. 

% Note that the optimal solution to problem \eqref{LQG_after}, along with its associated quantities (the optimal sequence of states, the EVTG term $W_{k}^{*}$, and the optimal value function $V^{*}$), is computable and can be obtained analytically. Conducting experiments under the LQG framework allows us to directly compare the performance of an ADP scheme and its estimated ratio bound with the optimal value function and the true ratio bound, respectively. This provides a clear benchmark for evaluating the effectiveness of our bounding framework. Details regarding ADP function training and stepwise error function training are presented below.
\vspace{\baselineskip}
\begin{remark}
    Note that the optimal solution to problem \eqref{LQG_after}, along with its associated quantities (the optimal sequence of states, the EVTG term $W_{k}^{*}$, and the optimal value function $V^{*}$), can be obtained analytically. Therefore, from a practical standpoint, there is no need to employ an ADP scheme to solve an LQG-based problem. The purpose of our LQG example is as follows: Since the optimal solution is available, we can compare the estimated ratio bound produced by our framework with the true ratio bound, thereby demonstrating the effectiveness of the proposed framework. The details are shown below. 
\end{remark}

\vspace{\baselineskip}
\textbf{ADP Function Training:} To train the ADP functions, each cluster contains $10^{6}$ state-action pairs $(\z_{k}^{*},\bmu_{k}^{*})$ with the associated labels $W_{k+1}^{*}(\z_{k}^{*},\bmu_{k}^{*})$ for $k = 0,\ldots,H-1$. These state-action pairs with their labels serve as ``expert demonstrations'', since all the actions are generated by the optimal policy derived from the recursion. The quantity $W_{k+1}^{*}(\z_{k}^{*},\bmu_{k}^{*})$ can be computed exactly through equation \eqref{W_star_lqg}. We use a deep neural network (DNN) to perform supervised learning for each cluster, resulting in $H=10$ separately trained DNNs with each representing $\hat{W}_{k+1}(\z_{k},\bmu_{k})$. Given that 
$$
\hat{Q}_{k}(\z_{k},\bmu_{k}) = r(\z_{k},\bmu_{k}) + \hat{W}_{k+1}(\z_{k},\bmu_{k}),
$$ 
we can minimize $\hat{Q}_{k}(\z_{k},\bmu_{k})$ with respect to $\bmu_{k}$ for each $k = 0,\ldots,H-1$ to obtain the sequence of actions under our ADP scheme. 

%\vspace{\baselineskip}
\textbf{Stepwise Error Function Training:} To train the stepwise error functions, we also use DNNs to perform supervised learning to obtain $\delta_{k}(\z_{k-1},\bmu_{k-1})$ as shown in equation \eqref{delta}. In this case, the labels associated with each cluster are different from those in the training of the ADP functions. The first cluster contains $10^{6}$ state-action pairs $(\z_{0}^{*},\bmu_{0}^{*})$ with the associated labels $Q_{0}^{*}(\z_{0}^{*},\bmu_{0}^{*})$.
Each of the remaining clusters also contains $10^{6}$ state-action pairs $(\z_{k-1}^{*},\bmu_{k-1}^{*})$, but, with the associated labels 
$$
\mathbb{E} \left[  \left. Q_{k}^{*}(\z_{k}^{*},\bmu_{k}^{*}) \right| \z_{k-1}^{*},\bmu_{k-1}^{*} \right]  - W_{k}^{*}(\z_{k-1}^{*},\bmu_{k-1}^{*})
$$ 
for $k = 1,\ldots,H-1$. 

For the quantity $\mathbb{E} \left[  Q_{k}^{*}(\z_{k}^{*},\bmu_{k}^{*}) | \z_{k-1}^{*},\bmu_{k-1}^{*} \right]$, it is expressed as: 
$$
\begin{aligned}
& \mathbb{E} \left[  \left. Q_{k}^{*}(\z_{k}^{*},\bmu_{k}^{*}) \right| \z_{k-1}^{*},\bmu_{k-1}^{*} \right] \\
& = \mathbb{E} \left[ \left. r(\z_{k}^{*},\bmu_{k}^{*}) + W^{*}_{k+1}(\z_{k}^{*},\bmu_{k}^{*}) \right| \z_{k-1}^{*},\bmu_{k-1}^{*} \right]. 
\end{aligned}
$$
Given $(\z_{k-1}^{*},\bmu_{k-1}^{*})$, the next state $\z_{k}^{*}$ and its associated optimal action $\bmu_{k}^{*}$ are random variables with known means and variances. Thus, the quantity $\mathbb{E} \left[ r(\z_{k}^{*},\bmu_{k}^{*}) | \z_{k-1}^{*},\bmu_{k-1}^{*} \right]$ can be computed exactly. To estimate $\mathbb{E} \left[ W^{*}_{k+1}(\z_{k}^{*},\bmu_{k}^{*})  | \z_{k-1}^{*},\bmu_{k-1}^{*} \right]$, we generate $1000$ realizations of the state-action pairs $(\z_{k}^{*},\bmu_{k}^{*})$ conditioned on the fixed $(\z_{k-1}^{*},\bmu_{k-1}^{*})$, and average the 1000 values of $W^{*}_{k+1}(\z_{k}^{*},\bmu_{k}^{*})$. This empirical average serves as an approximation of $\mathbb{E} \left[ W^{*}_{k+1}(\z_{k}^{*},\bmu_{k}^{*}) | \z_{k-1}^{*},\bmu_{k-1}^{*} \right]$. Therefore, we have an overall approximation of the quantity $
\mathbb{E} \left[  Q_{k}^{*}(\z_{k}^{*},\bmu_{k}^{*}) | \z_{k-1}^{*},\bmu_{k-1}^{*} \right]  - W_{k}^{*}(\z_{k-1}^{*},\bmu_{k-1}^{*})
$. 

By using a DNN on each cluster to perform supervised learning, we obtain $\hat{Q}_{0}(\z_{0},\bmu_{0})$ and $\delta_{k}(\z_{k-1},\bmu_{k-1})$ for $k = 1,\ldots,H-1$.

% Minimizing $\hat{Q}_{0}(\z_{0},\bmu_{0})$ with respect to $\bmu_{0}$ together with minimizing $\epsilon_{k}(\z_{k-1},\bmu_{k-1})$ with respect to $\z_{k-1}$ and $\bmu_{k-1}$ will give us the lower bound of the optimal value function. 

\begin{remark}
    Ideally, we could train the $\mathbb{E} ( \hat{Q}_{k})$ and $\hat{W}_{k}$ terms in equation \eqref{delta} separately, and their difference would yield $\delta_{k}(\z_{k-1},\bmu_{k-1})$. However, in our implementation above, we directly approximate the entire difference term and use that approximation as the training label. Consequently, our trained DNNs can be interpreted as $\delta_{k}(\z_{k-1},\bmu_{k-1})$.
    %since $\epsilon_{k}(\z_{k-1},\bmu_{k-1})$ itself is an approximation quantity. 
    % $$
    % \max_{\bmu_{k}} \Bigl\{ \mathbb{E} \left(  \hat{Q}_{k}(\x_{k},\bmu_{k}) | \x_{k-1},\bmu_{k-1} \right)  - \hat{W}_{k}(\x_{k-1},\bmu_{k-1}) \Bigl\}
    % $$
\end{remark}

\vspace{\baselineskip}
For more details on our DNNs, the architecture of the DNN we use in both the training of ADP functions ($\hat{Q}_{k}$) and stepwise error ($\delta_{k}$) is presented in Table~\ref{DNN_arch}.

\begin{table}[h!]
    \centering
    \begin{tabular}{|c c |} 
     \hline
      & \thead{Architecture (Dimension and Operations)} \\ [0.5ex] 
     \hline\hline
     Input &  \thead{6}  \\ 
     Layer 1 & \thead{Linear: $6 \xrightarrow{} 128 + \text{BatchNorm} + \text{ReLU} + \text{Dropout}(0.1)$} \\ [1ex] 
     Layer 2 & \thead{Linear: $128 \xrightarrow{} 128 + \text{BatchNorm} + \text{ReLU} + \text{Dropout}(0.1)$} \\ [1ex] 
     Layer 3 & \thead{Linear: $128 \xrightarrow{} 64 + \text{BatchNorm} + \text{ReLU} + \text{Dropout}(0.1)$} \\ [1ex] 
     Output & \thead{Linear: $64 \xrightarrow{} 1$} \\ [1ex] 
     \hline
    \end{tabular}
    \caption{DNN Architecture}
    \label{DNN_arch}
\end{table}

\textbf{Comparison of $\hat{Q}_{0}$ and $\delta_{k}$:} We compare the above training labels for each stepwise error function, as shown in Fig.~\ref{Q_delta_compare}

\begin{figure}[hbt!]
    \centering
    \includegraphics[width = 0.8\columnwidth]{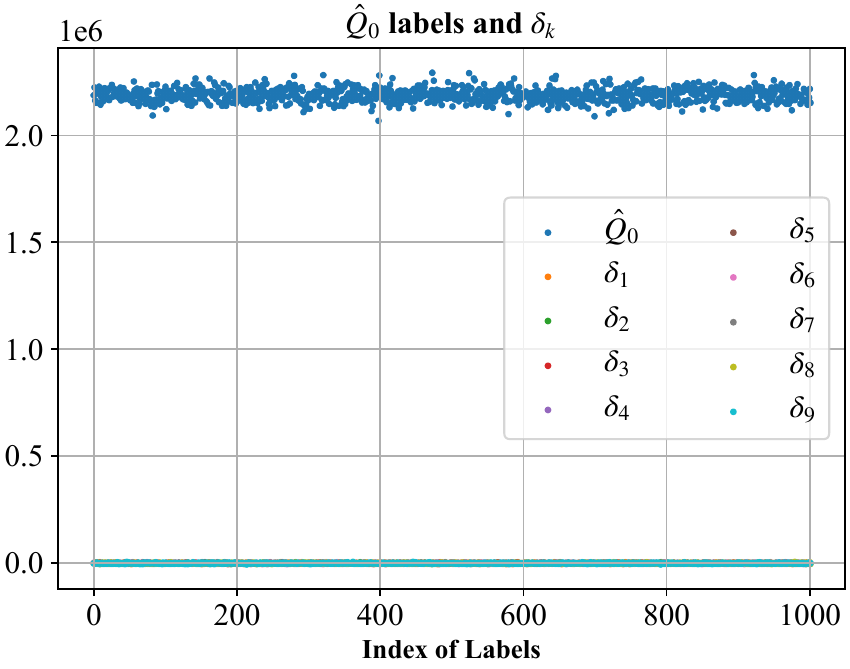}
    \caption{Comparison of training labels in stepwise error functions.} 
    \label{Q_delta_compare}
\end{figure}
We display the training labels of $\hat{Q}_{0}$ and $\delta_{k} \;(k = 1,\ldots,9)$, each with only 1000 random samples from a total of $10^{6}$. For each quantity, the remaining samples exhibit a similar numerical range as the 1000 displayed ones. The labels of $\hat{Q}_{0}$ are on the magnitude of $10^{6}$, whereas the $\delta_{k}$ terms are on the magnitude of $10^{3}$. Therefore, $\hat{Q}_{0} \gg \delta_{k}$, which aligns with the behavior expected from a well-designed ADP scheme, as discussed in Section \ref{subsec:main_results_performance}.

\textbf{Evaluation:} To evaluate our bounding framework, we generate another set of 100 i.i.d.\ test samples from $\mathcal{N}(\x_{0},\boldsymbol{\mathit{I}})$, i.e., 100 initial states $\x_{0}^{t}$ (or equivalently $\z_{0}^{t}$). We compare the optimal value function $V^{*}$, the approximate value function $\hat{V}$, and the lower bound $\underline{V}$ on these 100 test samples.  
\begin{itemize}
    \item \textbf{Optimal Cost $V^{*}$:} Using \thref{V_star_with_initial} and equation \eqref{V_value_fun}, we compute $V^{*} = V_{0}^{*}(\z_{0}^{t})$ for each of the 100 initial states. 

    \item \textbf{Approximate Cost $\hat{V}$:} To compute $\hat{V}$, we implement the ADP scheme by minimizing the trained $\hat{Q}_{k}(\z_{k},\bmu_{k})$ with respect to $\bmu_{k}$ for $ k = 0,\ldots,H-1$. Since $\hat{V}$ is an expected cost, we perform 500 independent rollout simulations for each initial state and take the empirical average as the estimate of $\hat{V}$. 

    \item \textbf{Lower Bound $\underline{V}$:} We minimize $\hat{Q}_{0}(\z_{0}^{t},\bmu_{0})$ and each $\delta_{k}(\z_{k-1},\bmu_{k-1})$ with respect to $\bmu_{0}$ and $(\z_{k-1},\bmu_{k-1})$, respectively. This yields $\hat{Q}_{0}(\z_{0}^{t},\hat{\bmu}_{0})$ and the corresponding quantities $\epsilon_{k}$ as shown in equation \eqref{epsilon} and \thref{bound_theorem}. Summing these terms produces the lower bound $\underline{V}$. 
\end{itemize}
 
\begin{figure}[hbt!]
    \centering
    \includegraphics[width = \columnwidth]{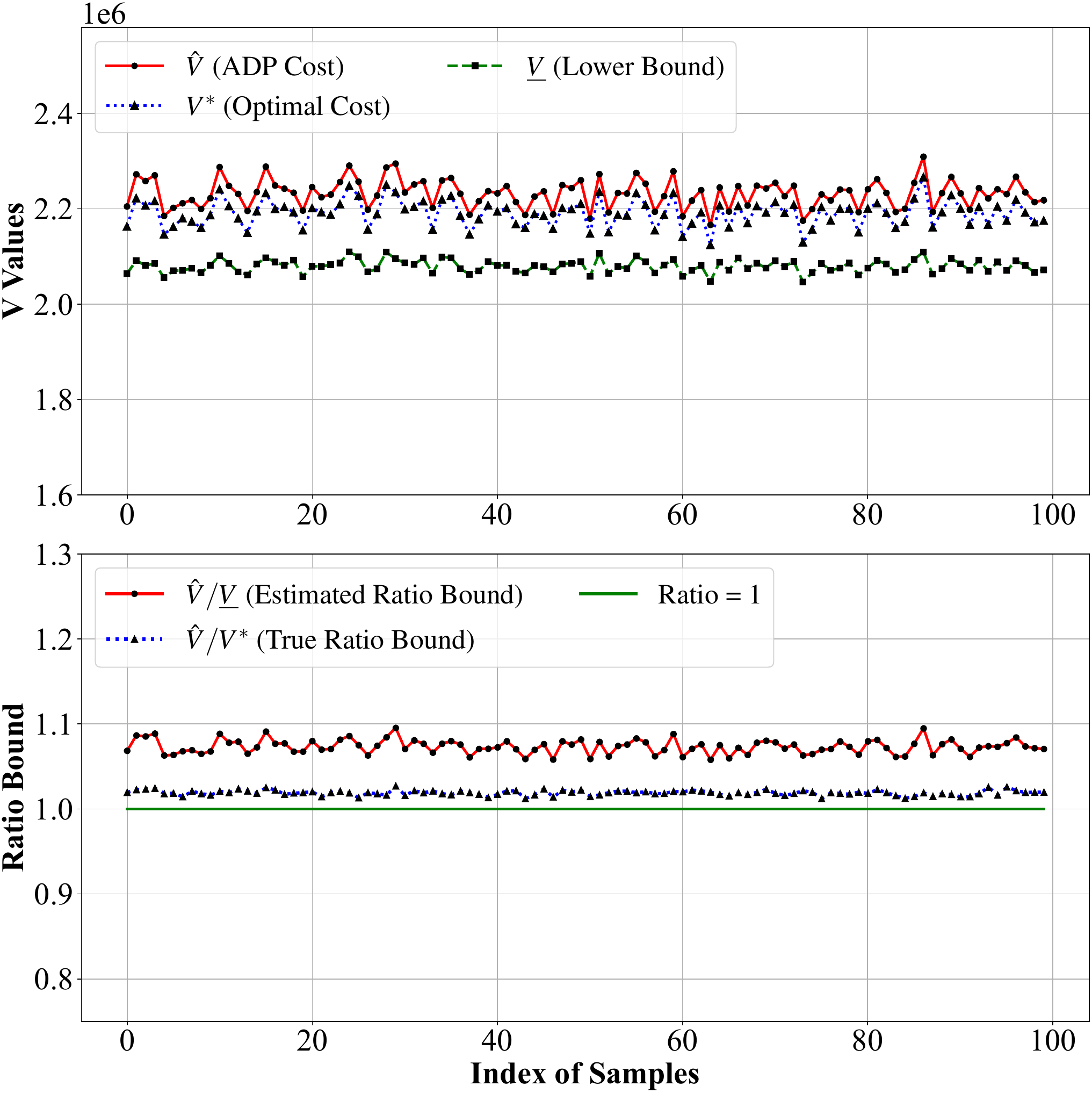}
    \caption{\textbf{Upper figure:} Comparison of $V^{*}$, $\hat{V}$, and $\underline{V}$; \; \textbf{Lower figure:} Comparison of the ratio bounds.} 
    \label{Ratio_Performance}
\end{figure}

At the end of Section~\ref{subsec:main_results_performance}, we state that our bounding framework applies to problems involving the minimization of expected cumulative cost, even though the main results in that section are derived under a maximization setting. Together with the fact that the LQG framework for our robot path planning formulates a cost minimization problem, this explains why minimization of the respective quantity is performed in the computation of the approximation cost ($\hat{V}$) and the lower bound ($\underline{V}$) above. Regarding the ratio bound, the estimated ratio $\hat{V}/\underline{V}$ serves as an upper bound for the true ratio $\hat{V}/V^{*}$, since $0 < \underline{V} \leq V^{*}$.

The results of our bounding framework are illustrated in Fig.~\ref{Ratio_Performance}. In the upper figure, as expected, the red line ($\hat{V}$) always lies above the blue line ($V^{*}$), while the green line ($\underline{V}$) remains the lowest. Moreover, the red and blue lines are very close to each other, indicating that the ADP scheme achieves near-optimal performance. 

In the lower figure, the red line (estimated ratio bound $\hat{V} / \underline{V}$) stays around 1.075, while the blue line (true ratio bound $\hat{V} / V^{*}$) stays around 1.025. This suggests that our framework provides a tight ratio bound that is close to the true performance.

%%%%%%%%%%%%%%%%%%%%%%%%%%%%%%%%%%%%%%%%%%%%%%%%%%%%%%%%%%%%%%
\subsection{Pendulum Stabilization}
In addition to the previous linear control example, we consider a nonlinear control problem based on the pendulum stabilization task. 

\subsubsection{Problem Setup} The objective is to stabilize a uniform-mass pendulum around the upright equilibrium position using bounded control inputs, as illustrated in Fig.~\ref{pendulum}.
\begin{figure}[hbt!]
    \centering
    \includegraphics[width=3.5cm]{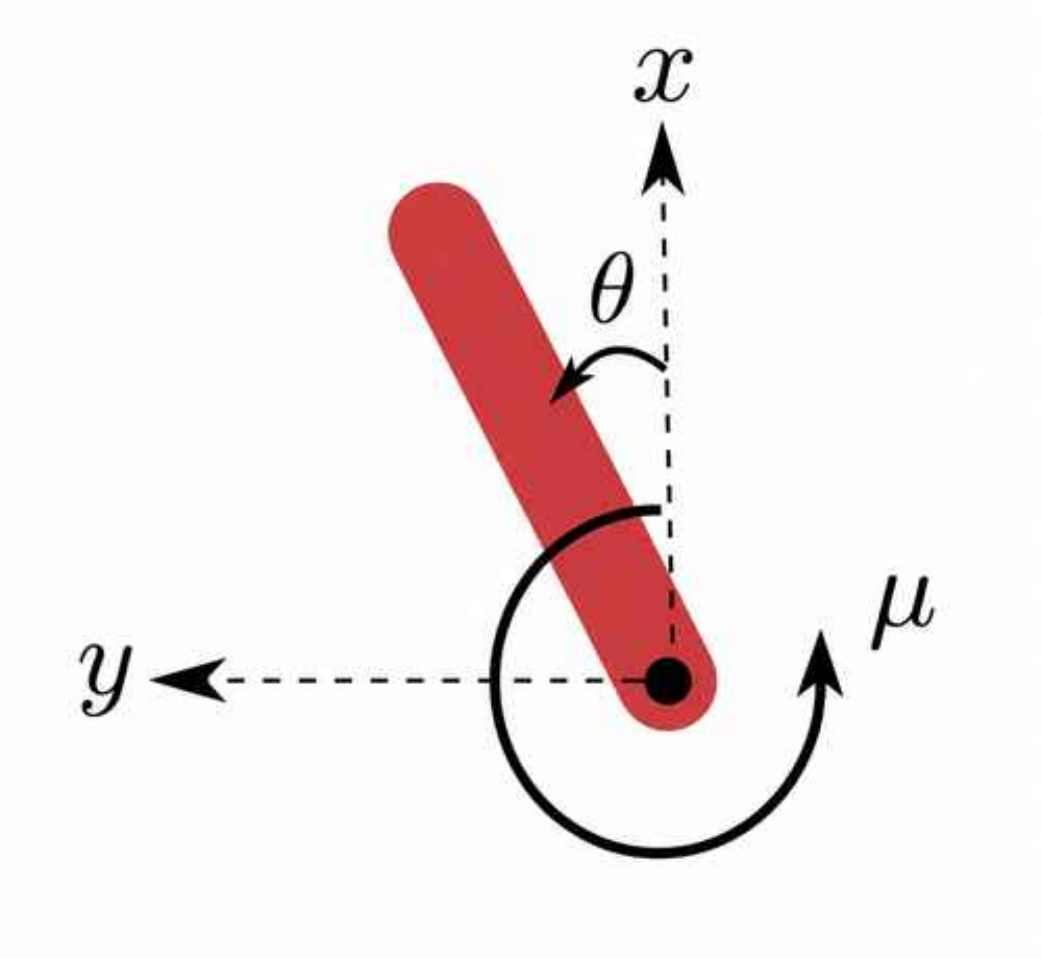}
    \caption{Pendulum stabilization task \cite{towers2024gymnasium}.} 
    \label{pendulum}
\end{figure}
The continuous-time frictionless pendulum dynamics are given by 
\begin{equation}
\label{pendulum_dyn_cont}
\ddot{\theta} = \frac{3g}{2l}\sin(\theta) + \frac{3}{ml^{2}}\mu,
\end{equation}
where $\theta$ denotes the pendulum angle, $\ddot{\theta}$ denotes the angular acceleration, $\mu$ is the applied torque, and $g$, $m$, and $l$ denote the gravitational acceleration, pendulum mass, and pendulum length, respectively \cite{towers2024gymnasium}.

Using a forward Euler discretization with sampling period $T$, the continuous-time dynamics can be discretized over a finite horizon of $H$ time steps as follows:
\begin{align}
\label{pendulum_dyn_dist}
    & \omega_{k+1} = \omega_{k} + T\left(\frac{3g}{2l}\sin(\theta_{k}) + \frac{3}{ml^{2}}\mu_{k}\right) \notag \\
    &  \theta_{k+1} =  \theta_{k} + T\omega_{k}
\end{align}
for each time step $k \; (k=0,\ldots,H-1)$, where $\theta_{k}$, $\omega_{k}$, and $\mu_{k}$ denote the pendulum angle, angular velocity, and applied torque at time step $k$, respectively. Defining the state vector as $\x_{k} = [\theta_{k},\omega_{k}]^{T}$, the discrete-time dynamics can be compactly expressed as $\x_{k+1} = h_{k}(\x_{k},\mu_{k})$, where $h_{k}$ denotes the state transition function defined by the discretized dynamics in equation \eqref{pendulum_dyn_dist}. Additionally, if the state deviates excessively from the upright equilibrium position, we consider the stabilization task to have failed. More precisely, if $|\theta_{k}| > \theta_{\text{max}}$ or $|\omega_{k}| > \omega_{\text{max}}$ for any time step $k$, then the task is terminated immediately, and only the rewards accumulated up to the termination time step are received. This is relevant to the training process using reinforcement learning (see below) because the received rewards play a role in tuning the neural-network weights.

Finally, the stabilization task is formulated as the following finite-horizon Markov decision process:
\begin{align}
\label{pendulum_mdp}
    \argmax_{(\pi_{0},\ldots,\pi_{H-1})} V &= \sum_{k=0}^{H-1} r_{k}(\x_{k},\mu_{k}) + r_{H}(\x_{H}) \notag \\ 
\text{s.t. } \x_{k+1} & = h_{k}(\x_{k},\mu_{k}),\;\; \mu_{k} = \pi_{k}(\x_{k}), \\
\mu_{k} & \in [-\mu_{\max},\mu_{\max}], \notag
\end{align}
where $r_{k}(\x_{k},\mu_{k}) = \exp(-3\theta_{k}^{2} - 3\omega_{k}^{2} - \mu_{k}^{2})$ and $r_{H}(\x_{H}) = \exp(-10\theta_{H}^{2} -\omega_{H}^{2})$ denote the intermediate stage reward and terminal reward functions, respectively. The terminal reward places large emphasis on maintaining the pendulum near the upright equilibrium position at the end of the horizon. 

\vspace{\baselineskip}
\subsubsection{Experiments} In our experiments, we set the pendulum system parameters as follows: 
$$
\begin{aligned}
    & T = 0.05 \; s, \;H = 20, \\
    & g = 10 \;m/s^{2}, \;m = 1kg, \; l = 1m, \\
    & \theta_{\max} = 0.35 \text{ rads}, \;\omega_{\max} = 1 \text{ rads/s}, \; \mu_{\max} = 1 \text{ Nm}.
\end{aligned} 
$$ 
The initial state $\x_{0}$ is sampled according to $\theta_{0} \sim \text{Unif}( -0.15, 0.15)$ and $\omega_{0} \sim \text{Unif} (-0.2,0.2)$. We train the controller by Gymnasium API \cite{towers2024gymnasium} using a reinforcement learning approach with proximal policy optimization (PPO) \cite{schulman2017proximal} to obtain the approximate policy $\hat{\pi}_{k}$ together with its corresponding approximate value-to-go function $\hat{V}_{k}$ for $k = 0,\ldots,H-1$. 

\textbf{Ratio Bound Computation:} Since only $\hat{V}_{k}$ is available after the training process, we set 
\begin{equation*}
    \begin{aligned}
        \hat{Q}_{0}(\x_{0},\hat{\mu}_{0}) & = \hat{V}_{0}(\x_{0}), \\
        \hat{W}_{k}(\x_{k-1},\mu_{k-1}) & = \hat{V}_{k}(h_{k-1}(\x_{k-1},\mu_{k-1}))
    \end{aligned}
\end{equation*} 
for $k = 1,\ldots,H-1$. We then compute the upper bound $\epsilon_{k}$ of the stepwise approximation error using equations \eqref{delta} and \eqref{epsilon} in Section~\ref{subsec:main_results_performance}. Note that the expectation operator $\mathbb{E}$ can be dropped when using equation \eqref{delta} due to the deterministic system dynamics in equation \eqref{pendulum_dyn_dist}. More specifically, we perform a grid search over the admissible state-action space $[-\theta_{\max},\theta_{\max}] \times [-\omega_{\max},\omega_{\max}] \times [-\mu_{\max},\mu_{\max}]$ to solve the maximization problems in equations \eqref{delta} and \eqref{epsilon}, thereby obtaining the upper bound $\epsilon_{k}$. Finally, applying \thref{bound_theorem} with the computed $\hat{V}_{0}(\x_{0})$ and $\epsilon_{k}$ yields the ratio bound $\beta$.

Three PPO policies are trained under the same computational budget, where each policy is trained under $3\times 10^{5}$ environment interactions using identical PPO hyperparameters except for the network size. For all three policies, both the policy network and value network consist of a single hidden layer with 16, 32, and 128 neurons, respectively. All three policies reach stable training performance before the estimated ratio bounds are computed. The corresponding training curves are shown in Fig.~\ref{pendulum_training}, where the rolling average curves clearly reach a plateau, indicating the convergence of the training processes. 

\begin{figure}[hbt!]
    \centering
    \includegraphics[width=\columnwidth]{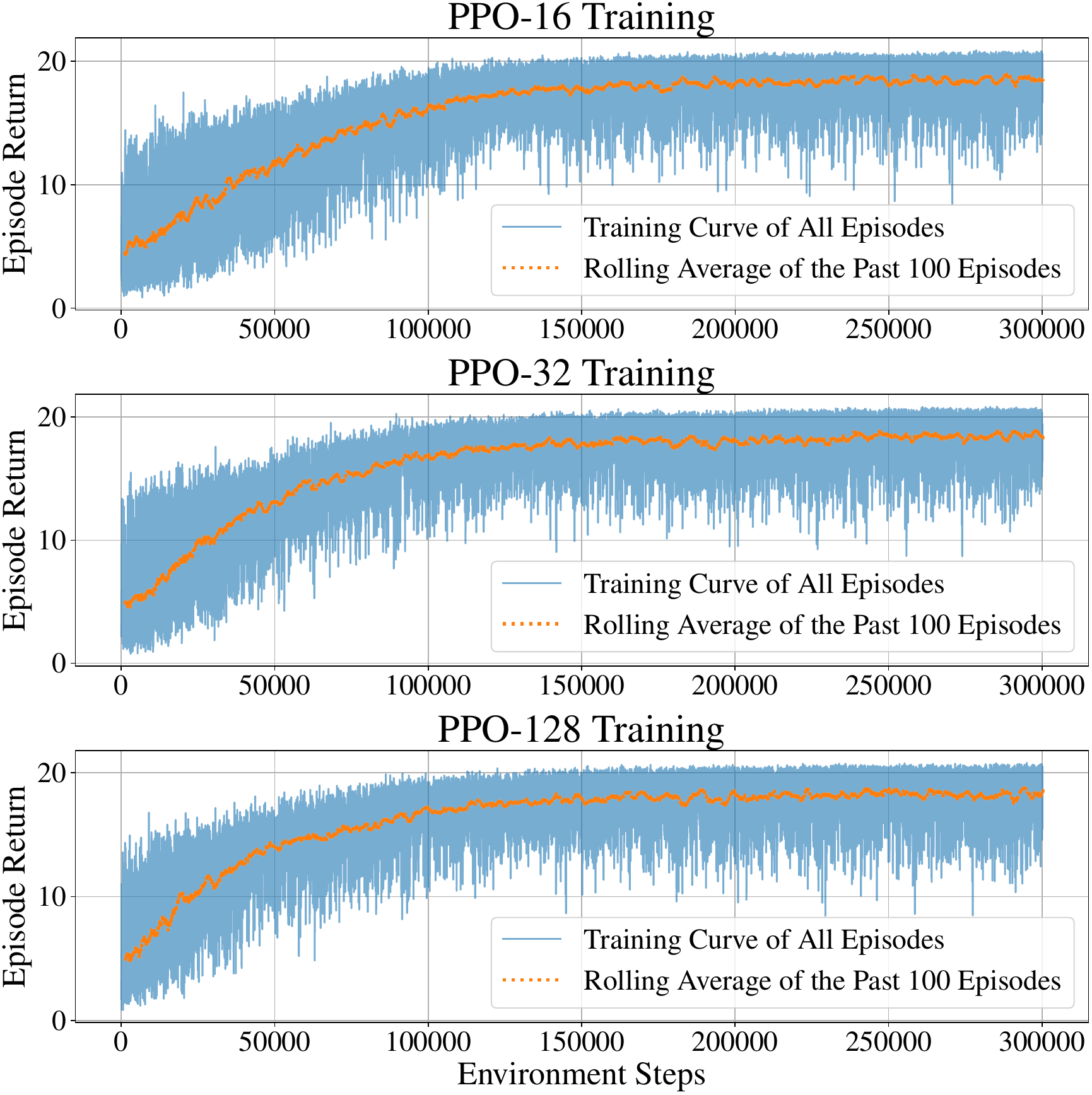}
    \caption{Training curves for three PPO policies.} 
    \label{pendulum_training}
\end{figure}

% To obtain the upper bound in \thref{bound_theorem}, we upper bound the first term by $\hat{Q}_{0}(\x_{0},\mu_{0}^{*}) \leq \hat{Q}_{0}(\x_{0},\hat{\mu}_{0})$ and each remaining term within the summation by
% \begin{equation}
% \label{stepwise_epsilon}
% r_{k}(\x^{*}_{k},\mu^{*}_{k}) + \hat{V}_{k+1}\left( h_{k}(\x^{*}_{k},\mu^{*}_{k}) \right) - \hat{V}_{k}(\x^{*}_{k}) \leq \epsilon_{k}
% \end{equation}
% for $k = 1,\ldots,H-1$. The quantity $\hat{Q}_{0}(\x_{0},\hat{\mu}_{0})$ can be obtained directly from the trained PPO policy. To compute the upper bound $\epsilon_{k}$, we perform a grid search over the admissible state-action space $[-\theta_{\max},\theta_{\max}] \times [-\omega_{\max},\omega_{\max}] \times [-\mu_{\max},\mu_{\max}]$ and select the state-action pair that maximizes the left-hand side of inequality \eqref{stepwise_epsilon}. 

For each trained PPO policy, we generate 50 sample trajectories that successfully completed the stabilization task and satisfies $|\theta_{H}| \leq 0.05$ and $|\omega_{H}| \leq 0.05$ at the terminal step (representing a terminal state relatively close to the upright equilibrium position).
We then compute the corresponding estimated ratio bounds through 
$$\frac{\hat{V}}{V^{*}} \geq \beta = \frac{\hat{V}} {\hat{V}_{0}(\x_{0}) + \sum_{k=1}^{H-1}\epsilon_{k}}.$$ 
The resulting estimated ratio bounds are presented in Fig.~\ref{pendulum_ratio_bound}. As the network size increases, the estimated ratio bound improves from approximately 75\% to 90\% and 95\%, indicating that more expressive functional approximations lead to tighter performance guarantees. Unlike the LQG-based robot path planning example in Section.~\ref{subsec:applications:LQG}, the exact value-to-go function and the optimal policy are unavailable for this nonlinear control problem. Therefore, only the estimated ratio bounds are reported. 

\begin{figure}[hbt!]
    \centering
    \includegraphics[width=\columnwidth]{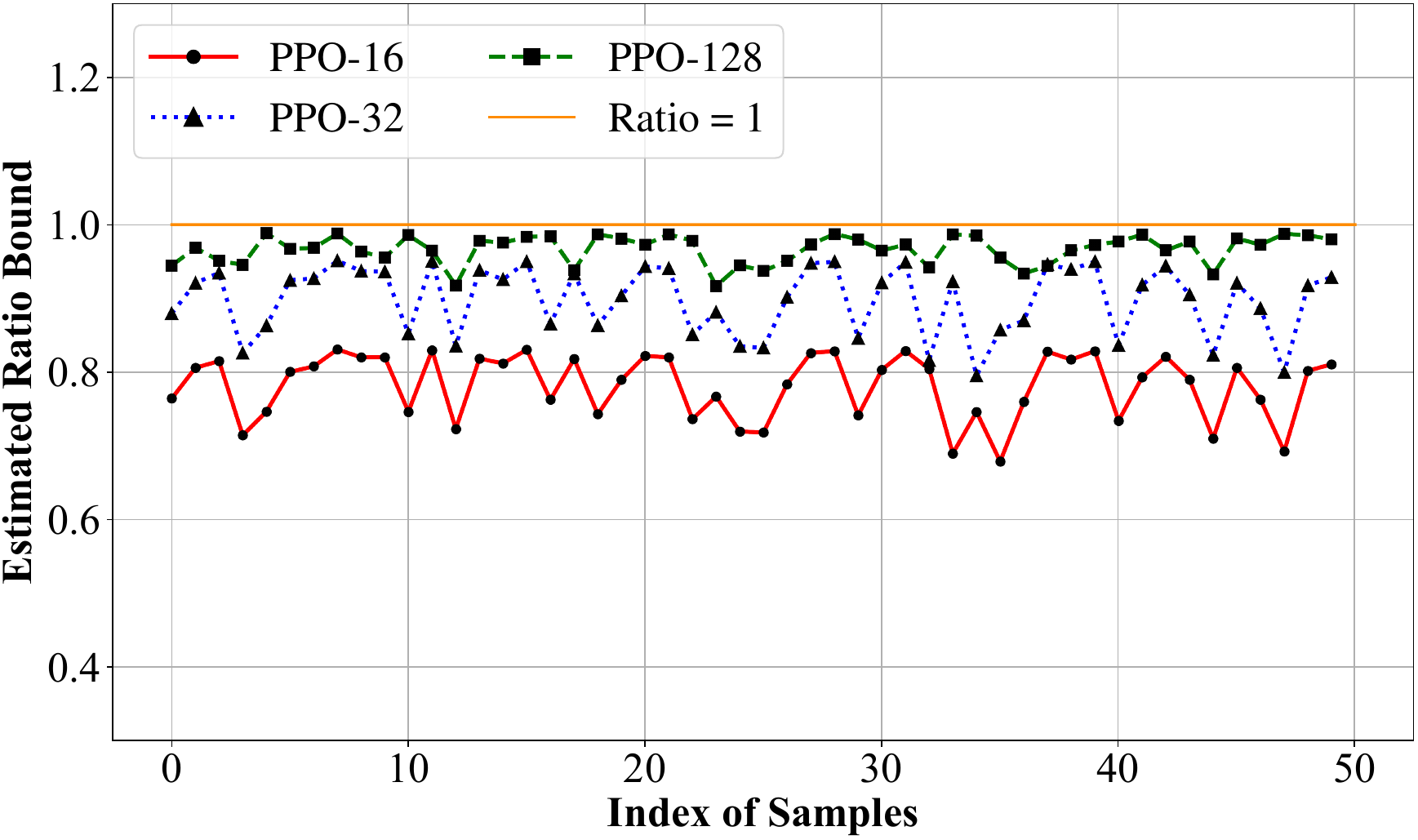}
    \caption{Estimated ratio bounds for pendulum stabilization task.} 
    \label{pendulum_ratio_bound}
\end{figure}

%%%%%%%%%%%%%%%%%%%%%%%%%%%%%%%%%%%%%%%%%%%%%%%%%%%%%%%%%%%%%%
\subsection{Multi-agent Sensor Coverage}
The multi-agent sensor coverage problem is originally studied in \cite{zhong2011distributed} and later analyzed in \cite{sun2019exploiting}. We also use this example to demonstrate our theoretical results regarding string submodular optimization in \cite{van2023improved},\cite{li2024bounds},\cite{van2025performance}. In this subsection, we apply our bounding framework to this example and show that the bound produced by our bounding framework coincides with the bound we developed in \cite{van2025performance}.

\subsubsection{Problem Setup}
We use the same way as in \cite{van2025performance} to describe the problem setup except for some slight notational changes. The mission space $\Omega \subset \mathbb{R}^{2}$ is modeled as a non-self-intersecting polygon where a total of $H$ sensors will be placed to detect a randomly occurring event. Those lattice points feasible for sensor placement within the mission space are denoted by $\Omega^{F} \subset \mathbb{R}^{2}$. The placement of these sensors occurs over $H$ steps, wherein one sensor is placed at each step. Our goal is to maximize the overall detection performance in the mission space, as illustrated in Fig.~\ref{sensors1}. 

\begin{figure}[hbt!]
    \centering
    \includegraphics[width=0.98\columnwidth]{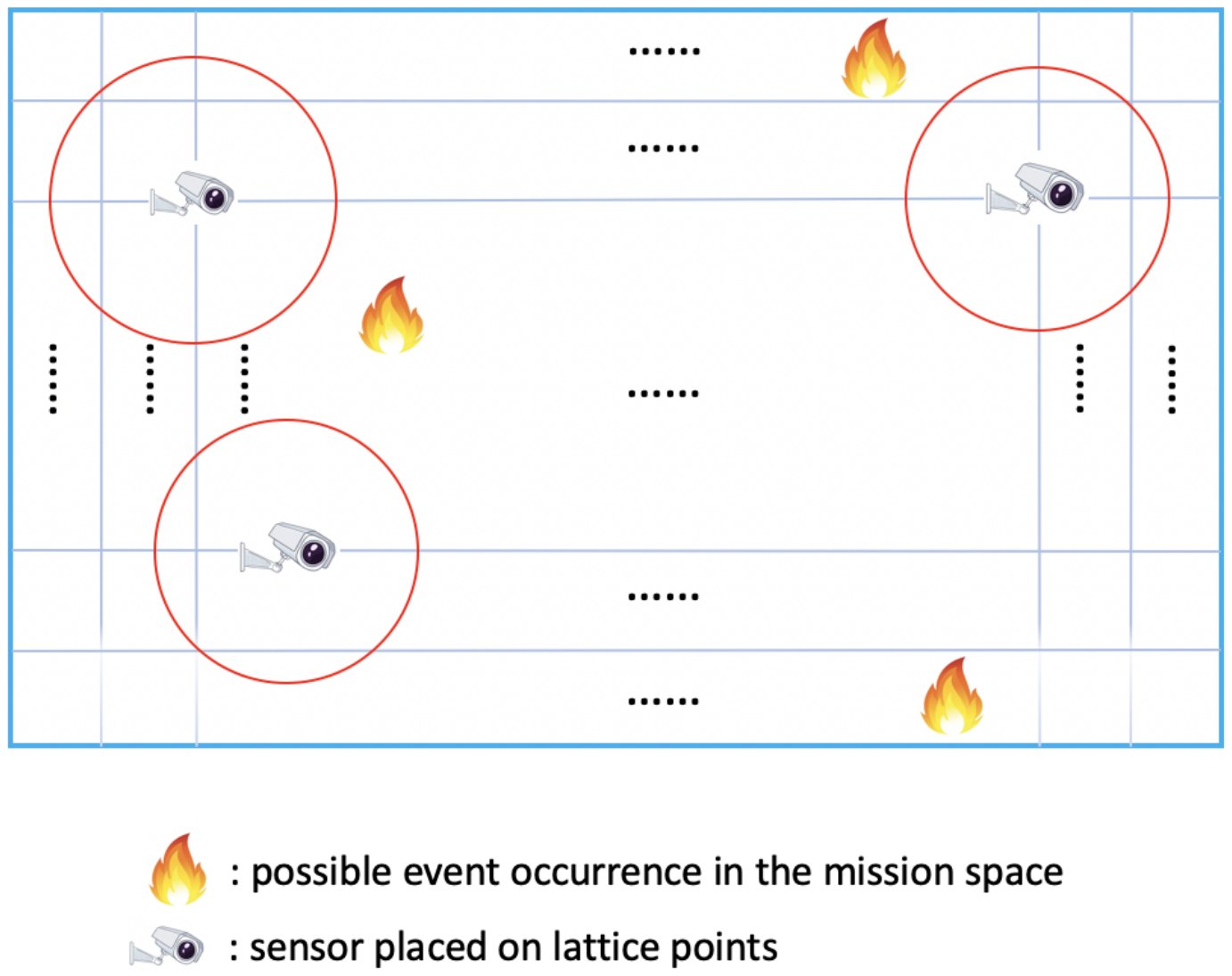}
    \caption{Sensor coverage for event detection in a mission space.} 
    \label{sensors1}
\end{figure}

The location of a sensor placed at step $k$ is denoted by $\mathbf{s}_{k}$, and it can detect any occurring event at location $\mathbf{x} \in \Omega$ with probability $p_{k}(\mathbf{x},\mathbf{s}_{k}) = \exp{(-\lambda_{k} \|\mathbf{x} - \mathbf{s}_{k}\|)}$. We assume that $\lambda_{i} > 0$, where $\lambda_i$ is the coverage decay rate of that sensor at step $k$ for $k = 0, \ldots, H-1$. Once all $H$ sensors have been placed, we represent the choice of these sensors with the string $S = (\mathbf{s}_0,\mathbf{s}_1,\ldots, \mathbf{s}_{H-1}) \in \left(\Omega^{F}\right)^H$. 

Moreover, we assume that all the sensors are working independently, and the decay rate can change depending on the step at which the sensor is placed. The sensing capability of all unused sensors may become worse with each passing step, and once the sensors are placed, their decay rates will remain constant from their placement step onward. Specifically, the decay rate variation is modeled as $\lambda_{k} = \lambda_{0} + \zeta t_{k}$ for $k = 0, \ldots, H-1$, where $\{ t_{k} \}$ is a monotonically increasing time index sequence beginning at $0$, and $\zeta$ is the parameter controlling the increase in the decay rate. Larger decay rate represents worse sensing capability. 

We can thus calculate the probability of detecting an occurring event at location $\mathbf{x} \in \Omega$ after placing $H$ sensors at locations $S$ using the following formula 

\begin{equation}
    \begin{aligned}
    P(\mathbf{x},S) & = 1-\prod_{k=0}^{H-1} \left( 1-p_k(\mathbf{x},\mathbf{s}_k) \right) \\
    & = 1-\prod_{k=0}^{H-1} \left( 1- e^{-\lambda_{k} \|\mathbf{x}-\mathbf{s}_k\|} \right).
    \end{aligned}
\end{equation}

To calculate the detection performance over the entire mission space $\Omega$, we incorporate the event density function $R$. The event occurrence over $\Omega$ is characterized by an event density function $R: \Omega \xrightarrow{} \mathbb{R}_{\geq 0}$, and we assume that $\int_{\mathbf{x} \in \Omega} R(\mathbf{x}) d\mathbf{x} < \infty$. Our objective function then becomes $H(S) = \int_{\mathbf{x} \in \Omega} R(\mathbf{x})P(\mathbf{x},S) d\mathbf{x}$, and we have the following string optimization problem:
\begin{equation}
\begin{aligned}
\label{obj_fun_sensor}
    \text{maximize } H(S), \; \text{ subject to } S \in \left(\Omega^{F}\right)^H.
\end{aligned}
\end{equation}

When the decay rate control parameter $\zeta > 0$ and the decay rates $\lambda_{k}$ vary across steps, the value of $H(S)$ depends on the order of the selected locations $\s_{k}$ in the string $S$. In this case, the sensors are \textbf{nonhomogeneous}, and each lattice point can be used more than once for sensor placement. 

A special case arises when $\zeta = 0$ and all decay rates are identical, i.e, $\lambda_{k} = \lambda_{0}$ for $k = 0,\ldots,H-1$. In this case, the value of $H(S)$ becomes independent of the order of $\s_{k}$, and the problem in \eqref{obj_fun_sensor} reduces to a set optimization problem. We refer to the sensors in this case as \textbf{homogeneous}. Note that each lattice point can only be used once in this case because no repeated elements is allowed in a set. 

It is shown in \cite{van2025performance} that $H(S)$ is a monotone string submodular function. Therefore, all theoretical results regarding set and string submodular functions can be applied to our multi-agent sensor coverage problem. 

\vspace{\baselineskip}
\subsubsection{Experiments} 
In our experiments, we consider a rectangular mission space $\Omega \subset \mathbb{R}^2$ of size $50 \times 40$, in which $H=5$ sensors will be placed. The whole mission space is partitioned into four identical regions as shown in Fig.~\ref{sensors2}. The top right and bottom left regions have high event densities whose $R(\mathbf{x})$ for each $\mathbf{x}$ is randomly generated from $\text{Unif}(0.5,0.8)$. The other two regions have low event densities whose $R(\mathbf{x})$ for each $\mathbf{x}$ is randomly generated from $\text{Unif}(0.1,0.3)$. The above setup reflects a scenario that random events are more likely to occur in the top right and bottom left regions. 

For the string case with nonhomogeneous sensors, we set the decay rate control parameter $\zeta = 0.1$, define the time index sequence as $t_{k} = 0.1k$ for $k = 0, \ldots, H-1$, and vary initial decay rate $\lambda_{0}$ over the range $\left[ 0.1,1.5 \right]$. For the set case with homogeneous sensors, we set $\zeta = 0$, and similarly vary the idenitical decay rate $\lambda$ over the range $\left[ 0.1,1.5 \right]$. 

\begin{figure}[hbt!]
    \centering
    \includegraphics[width=0.9\columnwidth]{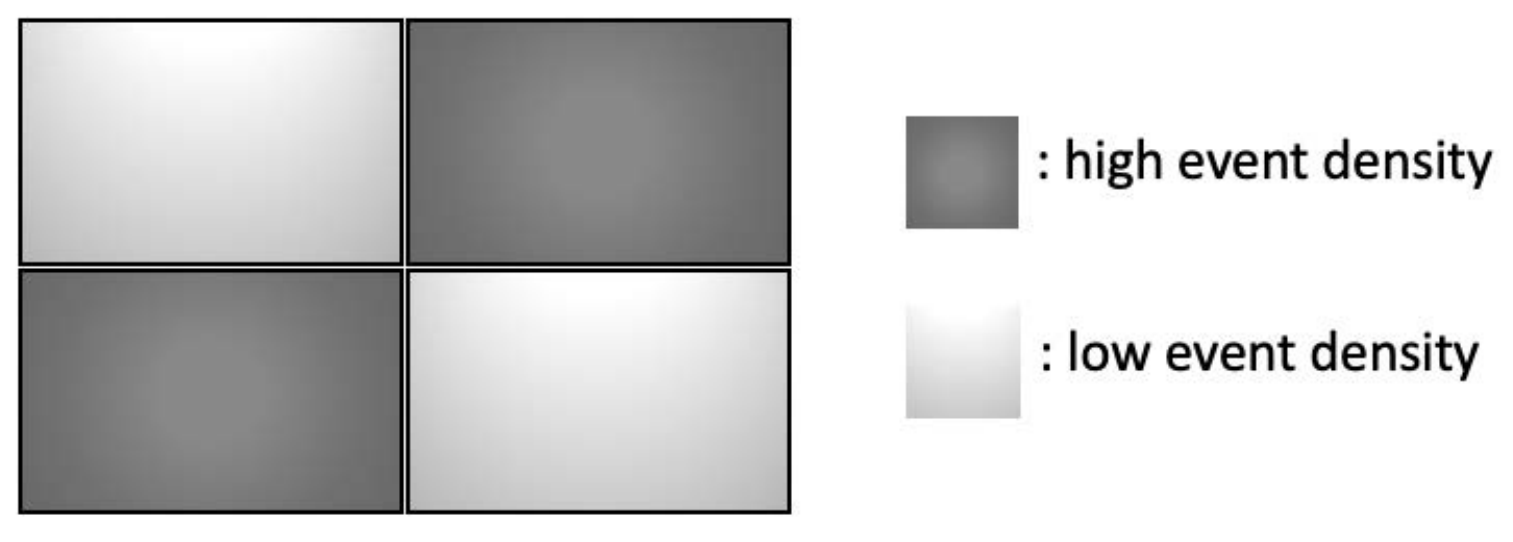}
    \caption{Mission space partition.} 
    \label{sensors2}
\end{figure}

We aim to compute and compare different performance bounds for the solution $G_{H-1}$ produced by the greedy algorithm. Let $\beta_{0}$ denote the classical $(1 - e^{-1})$ performance bound for set and string submodular functions, and let $\beta_{1}$ denote the computable greedy curvature bound in \cite{conforti1984submodular} and its generalized version in \cite{van2025performance}. 

Based on the theoretical results presented in Section~\ref{subsec:main_results_submodular}, we can compute the upper bound of the optimal objective value, denoted by $\overline{V}$. For the string case with nonhomogeneous sensors, this upper bound is given by
$$\overline{V} = H \max_{\s \in \Omega^{F}} f(\s) = Hf(\s_{(0)}).$$
For the set case with homogeneous sensors, this upper bound takes the form
$$\overline{V} = \sum_{k=0}^{H-1} f(\s_{(k)}),$$
where $\s_{(k)} \in \Omega^{F}$ is the placement location that will produce the $(k+1)^{\text{th}}$ largest $f(\s)$ value. We use $\beta_{2} = f(G_{H-1}) / \overline{V}$ to denote the performance bound obtained under our proposed framework. A comparison of these performance bounds is shown in Fig.~\ref{sensors_bound_adp}.

\begin{figure}[hbt!]
    \centering
    \includegraphics[width=0.98\columnwidth]{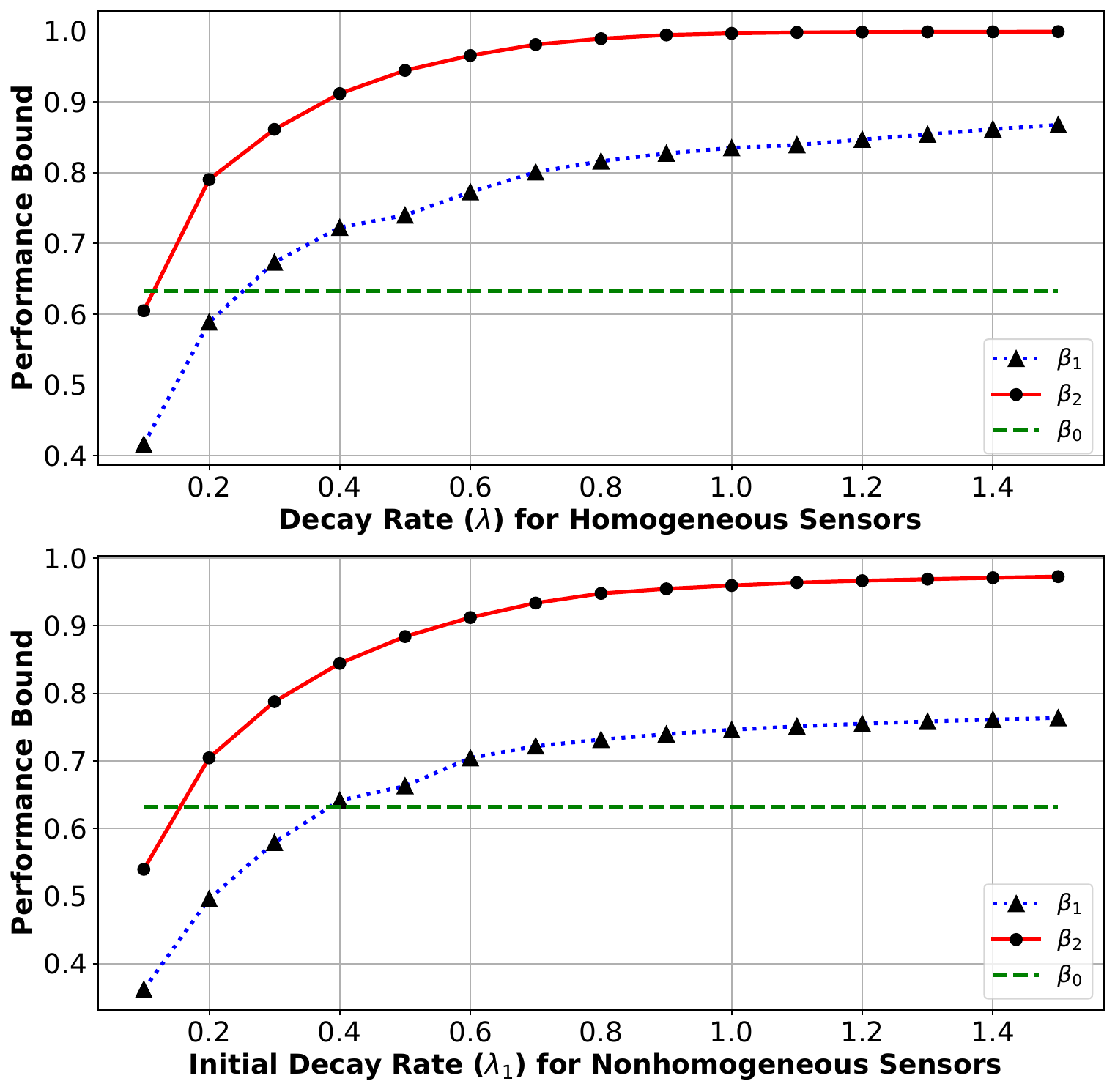}
    \caption{Performance bound comparison under different (initial) decay rates with number of placed sensors $H = 5$. \\
    Upper Figure: Homogeneous Sensors (Set Case); \\
    Lower Figure: Nonhomogeneous Sensors (String Case).}
    \label{sensors_bound_adp}
\end{figure}

We can observe from Fig.~\ref{sensors_bound_adp} that the red line ($\beta_{2}$) consistently dominates the blue line ($\beta_{1}$), which aligns with the theoretical result developed in \cite{van2025performance}. It also outperforms the green line ($\beta_{0}$) in many instances. Although the $\beta_{2}$ bound attains the same values as the top-$H$ bound we derived in \cite{van2025performance}, it is computed through our new bounding framework. This implies that our bounding framework generalizes the results in \cite{van2025performance}.

%%%%%%%%%%%%%%%%%%%%%%%%%%%%%%%%%%%%%%%%%%%%%%%%%%%%%%%%%%%%%%
%%%%%%%%%%%%%%%%%%%%%%%%%%%%%%%%%%%%%%%%%%%%%%%%%%%%%%%%%%%%%%
\section{Conclusion and Future Work}
\label{sec:conclusion}
We present the first computable ratio bound for approximate dynamic programming (ADP) schemes in finite-horizon sequential decision-making problems. We further show that how our bounding framework generalizes some results in the field of submodular optimization. Through applications to data-driven robot path planning, pendulum stabilization, and submodular multi-agent sensor coverage problems, we demonstrate the broad applicability of our proposed framework. Future work may include extending the bounding framework to multi-agent sequential decision-making settings, particularly those modeled by decentralized Markov decision processes.

%%%%%%%%%%%%%%%%%%%%%%%%%%%%%%%%%%%%%%%%%%%%%%%%%%%%%%%%%%%%%%
%%%%%%%%%%%%%%%%%%%%%%%%%%%%%%%%%%%%%%%%%%%%%%%%%%%%%%%%%%%%%%
% \appendices

% Appendixes, if needed, appear before the acknowledgment.

% \section*{Acknowledgment}

\section*{References}
\bibliographystyle{ieeetr}
\bibliography{bib}

\end{document}